\newtheorem{proposition}{Proposition}
\newtheorem{lemma}{Lemma}
\newtheorem{remark}{Remark}
\newtheorem*{corollario}{Corollary}
\newtheorem*{Hyp}{H}
\newtheorem*{teorema}{Theorem}
\def\be{\begin{equation}}
\def\ee{\end{equation}}
\def\bea{\begin{eqnarray}}
\def\eea{\end{eqnarray}}
\def\ni{\noindent}
\def\nn{\nonumber}
\newcommand{\meanv}[1]{\left\langle#1\right\rangle}
\newcommand{\R}{\mathbb{R}}
\newcommand{\E}{\mathbb{E}}
\newcommand{\D}{\mathrm{d}}
\newcommand{\ie}{\textit{i.e. }}
\newcommand{\B}{\mathcal{B}}
\newcommand{\e}{\varepsilon}
\newcommand{\w}{\omega}
\def\eg{\textit{e.g.} }
\newcommand{\OOO}[1]{O \left(#1\right)}
\def\spect{\operatorname{Spect}}
\def\Var{\operatorname{Var}}
\def\s{\sigma}
\def\b{\beta}
\def\a{\alpha}
\def\g{\gamma}
\def\d{\delta}
\def\l{\lambda}
\def\r{\rho}
\title{Legendre Duality of Spherical and Gaussian Spin Glasses}
\date{\today}
\author{Giuseppe Genovese}
\address{Giuseppe Genovese: Institut f\"ur Mathematik, Universit\"at Z\"urich,
Winterthurerstrasse 190, CH-8057 Z\"urich, Switzerland.}
\email{giuseppe.genovese@math.uzh.ch}
\author{Daniele Tantari}
\address{Daniele Tantari: Dipartimento di Matematica, Sapienza Universit\`a di Roma, Piazzale Aldo Moro 2, 00185, Roma, Italia.}
\email{tantari@mat.uniroma1.it}
\subjclass[2000]{%
}\keywords{}
\begin{document}\maketitle

\begin{abstract}
\ni The classical result of concentration of the Gaussian measure on the sphere in the limit of large dimension induces a natural duality between Gaussian and spherical models of spin glass. We analyse the Legendre variational structure linking the free energies of these two systems, in the spirit of the equivalence of ensembles of statistical mechanics. Our analysis, combined with the previous work \cite{BGGT}, shows that such models are replica symmetric. Lastly, we briefly discuss an application of our result to the study of the Gaussian Hopfield model.

\vspace{5mm}

\ni \textbf{MSC:} 82B44.
\end{abstract}
\section{Introduction and Main Result}

\ni The equivalence between Gaussian measure and uniform measure on the sphere in the limit of large dimension, is nowadays a classical argument, shared by probability and mathematical physics. It goes back traditionally to Poincar\'e and we refer to the paper \cite{persi} for a detailed mathematical and historical discussion. 

\ni Roughly, the probabilistic idea is that the spherical measure of cylindrical sets approaches the Gaussian one when the dimension becomes infinite (see \eg \cite{Mk}). Physically this means that for a gas of non interacting particles, therefore with a fixed kinetic energy, the single particle velocity is distributed according to the Maxwell-Boltzmann statistics in the thermodynamic limit. 

\ni For spin systems, the intimate connection between Gaussian and spherical models has been noticed since their first systematic introduction by Berlin and Kac in \cite{BK} for ferromagnetic systems. In the present paper we are going to investigate this relation for spin glasses. 

\ni We will consider a system of $N$ soft spin $z_i\in\R$, $i=1...N$ interacting through the mean field disordered Hamiltonian:
\be\label{eq:H}
H_N(z,J):=-\frac{1}{\sqrt{N}}\sum_{(i,j)}J_{ij} z_iz_j.
\ee
In the whole paper we will use the following hypothesis on the disorder:

\begin{Hyp}\label{H}
The random matrix $J_{ij}$ is in the Symmetric Wigner Ensemble. In addition we assume that there is a $\theta>0$ such that $\mathbb{E}\left[e^{\theta \left(\frac{J_{ij}}{\sqrt{N}}\right)^2}\right]<\infty$ $\forall i,j$ $N$-uniformly. 
\end{Hyp}

\ni Then (see for instance \cite{AGZ}\cite{Tao}) $\left\{J_{ij}/\sqrt{N}\right\}$ can be diagonalised and $\spect[J]:=\spect\left[\{\frac{J_{ij}}{\sqrt{N}}\}\right]\in\R$. Furthermore $\mathbb{E}[J_{ij}]=0$ and $\mathbb{E}[J_{ij}J_{hk}]=J^2\d_{ih}\d_{jk}$ for a certain constant $J^2>0$, and
\begin{enumerate}
\item There is a $\bar\l>0$ such that $\forall a>\bar\l$ 
\be\label{eq:P(l>a)}
P(|\l|\geq a)\leq C_1 e^{-\theta a^2 N},
\ee
for any $\l\in\spect[J]$ and two constants $C_1,\theta>0$;\\
\item The distribution of eigenvalues of $\left\{J_{ij}/\sqrt{N}\right\}$ converges for $N\to\infty$ to the semicircle law $$\r(\l)=\frac{2\sqrt{\bar\l^2-\l^2}}{\pi \bar\l^2}.$$
\end{enumerate}
We will name the constant $\bar\l:=\max\spect[J]=\sqrt{J^2}$ the maximum eigenvalue of $\left\{J_{ij}/\sqrt{N}\right\}$.

\ni We will be concerned here about two kind of distributions for the $N$ continuous spin $z_1, \dots, z_N$, namely the uniform distribution on the $N$ dimensional sphere in $\R^N$, $S_{R\sqrt{N}}$, centred in the origin with radius $R\sqrt{N}$, or spherical distribution $\s_{R,N}(z)$ and the standard Gaussian distribution $\g_N(z)$. 

\ni Our interest in the topic comes from the theory of neural networks. In collaboration with A. Barra and F. Guerra we have proven that, in the case of Gaussian distributed patterns, the free energy of the Hopfield model can be written as a convex sum of the free energy of the Sherrington-Kirkpatrick (SK) model and a suitably defined Gaussian spin glass \cite{NN?}. This decomposition holds exactly in the high temperature regime and in the replica symmetric approximation. We stress that the same feature had been already observed in other bipartite spin glasses (see \cite{Bip}). 

\ni For these reasons in \cite{BGGT} a model of Gaussian distributed spins with disordered interaction has been studied . More precisely, let $z_i\in\R$, $i=1...N$, be i.i.d. random (soft) spin  $\mathcal{N}(0,1)$ variables, and let them interact through the Hamiltonian (\ref{eq:H}). In such a model the divergence of the partition function arises quite naturally at low temperature, so one needs to introduce the regularised 
\be\label{eq:Zg}
 Z^{g}_N(\beta,\lambda):=\int_{\R^N}dz_1...dz_N\frac{e^{-\|z\|^2/2}}{(2\pi)^{N/2}}e^{\left(-\beta H_N(z,J)-\frac{\beta^2}{4N}\|z\|^4+\frac{\lambda}{2}\|z\|^2\right)},
\ee
for $\l\in\R$, such that $\E Z(\b,J,0)=1$ for Gaussian distributed disorder. The associated quenched pressure is
\be\label{eq:Agauss-def}
A_N^g(\b,\l):=\frac1N\mathbb{E}\log  Z^{g}_N(\beta,J,\lambda),\qquad A^g(\b,\l):=\lim_N\frac1N\mathbb{E}\log  Z^{g}_N(\beta,J,\lambda).
\ee 

\ni In \cite{BGGT} we have approached the problem from a nowadays usual perspective in spin glass theory, after the celebrated results by Guerra and Talagrand on the SK model (for which we refer to \cite{leshouches} and \cite{talabook}). We have studied the Edward Anderson order parameter, \ie the replicas overlap, by using Guerra's interpolation. The main achievement contained in the paper is that the broken replica symmetry (RSB) bound does not improve the replica symmetric (RS) one, that is:
\be\label{eq:AgRS}
A^{g}_{RS}(\b,\l)=-\frac{1}{2}\log((1-\lambda+\beta^2\bar{q}))+\frac{\beta^2\bar{q}}{(1-\lambda+\beta^2\bar{q})}+\frac{\beta^2}{4}\bar{q}^2,
\ee
with the RS order parameter given by $\bar{q}=0$ for $\beta\leq 1-\lambda$ and $\bar{q}=\frac{\beta-(1-\l)}{\beta^2}$ otherwise.

\ni Originally, at least on the mathematical side, this model has been introduced and completely solved by Ben Arous, Dembo and Guionnet in \cite{BDG}. In the first part of their paper the equilibrium properties of the model are analysed via the construction of a large deviation principle, which is used to study the aging phenomenon in the Langevin dynamics in the second part. Incidentally, as a byproduct of our analysis, we provide here a different strategy to get the formula for the free energy obtained in \cite{BDG}. 

\ni On the other hand, the spherical model of spin glass was introduced by Kosterlitz, Thouless and Jones in \cite{KT}, where the authors gave the form of the free energy. It turns out to be RS and their method, although the proof passes over some mathematical details, is rigorous. Then Crisanti and Sommers have studied the $p$-spin case in \cite{CS} and successively Talagrand has proved in all the details the validity of the general Crisanti-Sommers solution in \cite{Tsfer}.

\ni The Spherical Model is defined as follows: let $z_i\in\R$, $i=1,\dots,N$ be $N$ i.i.d. random spin variables distributed according to $\s_{R,N}$, and let them interact via the Hamiltonian (\ref{eq:H}). The partition function is defined as
\be\label{eq:Zsf}
Z^{sf}_N(\beta,R):=\int_{\R^N}d\s_{R,N}(z)e^{-\beta H_N(z,J)}.
\ee
To this partition function it is associated the pressure:
\be\label{eq:Asf}
A^{sf}_N(\b,R):=\frac{1}{N}\E\log Z^{sf}_{N},\qquad A^{sf}(\b,R):=\lim_N \frac{1}{N}\E\log Z^{sf}_{N}.
\ee
Since it depends in fact by $R^2$, we will also denote the pressure as $A^{sf}(\b,R^2)$ (in the literature one finds $\bar\l, R=1$). The results of Crisanti, Sommers and Talagrand, along with later works on the subject (\cite{PT}\cite{Panc}\cite{franz}), deal with Gaussian disorder. However we will see that this assumption can be relaxed in the sense of hypothesis \textbf{H} (see also \cite{BDG}). Observe that these models seem to be more sensitive to the disorder distribution with respect to, for instance, the SK model.

\vspace{0.5cm}

\ni The main result of the present work is the following

\begin{teorema}
The pressure of the spherical model converges in the thermodynamic limit a.s. to 
\be\label{eq:A-sferico}
A^{sf}(\b,R^2)=\min_{q\geq\b\bar\l}\left(q R^2-\frac 12\int\r(\l)\log(q-\b\l)-\frac12\log R^2-\frac12-\frac12\log2\right).
\ee
The pressure of the Gaussian model converges in the thermodynamic limit a.s. to
\be\label{eq:A-gauss}
A^g(\b,\l)=\max_{R^2\in(0,\infty)}\left(A^{sf}(\b,R^2)-\frac{\b^2R^4}{4}+\frac{(\l-1)R^2}{2}+\frac12\log R^2+\frac{1}{2} \right).
\ee
\end{teorema}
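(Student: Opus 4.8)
\medskip

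\noindent The plan is to prove the two displayed identities separately: \eqref{eq:A-sferico} by a direct (rigorous) saddle-point evaluation of $Z^{sf}_N$, and \eqref{eq:A-gauss} by slicing the Gaussian integral \eqref{eq:Zg} into spheres and applying Laplace's method. Both rest on a preliminary reduction: by \textbf{H}(1) and Borel--Cantelli, a.s.\ for all large $N$ every eigenvalue of $J/\sqrt N$ lies in $[-\bar\l-\e,\bar\l+\e]$, and on this event the subgaussian moment bound of \textbf{H} yields, via a bounded-differences (or Gaussian) concentration inequality for $J\mapsto N^{-1}\log Z$, that $N^{-1}\log Z^{sf}_N$ and $N^{-1}\log Z^{g}_N$ concentrate around their means at Gaussian rate; hence it suffices to compute the a.s.\ (equivalently, averaged) limits.

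\medskip

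\noindent\emph{The spherical formula.} In the orthonormal eigenbasis of $J/\sqrt N$ one has $-\b H_N(z)=\b\sum_k\l_k z_k^2$ under the constraint $\|z\|^2=R^2N$. Expressing $\s_{R,N}$ through the surface area $\tfrac{2\pi^{N/2}}{\Gamma(N/2)}(R\sqrt N)^{N-1}$ and the identity $\d(\|z\|^2-R^2N)=\tfrac1{2\pi\I}\int_{c-\I\infty}^{c+\I\infty}e^{-s(\|z\|^2-R^2N)}\D s$, and carrying out the Gaussian integral over $z\in\R^N$ --- legitimate when $\mathrm{Re}\,s>\b\bar\l$, which is where the good event confines every $\b\l_k$ --- one obtains the exact representation
\[
Z^{sf}_N(\b,R)=\Gamma(N/2)\,(R\sqrt N)^{2-N}\,\frac1{2\pi\I}\int_{c-\I\infty}^{c+\I\infty}\exp\!\Big(N\big[\,sR^2-\tfrac1{2N}\sum_k\log(s-\b\l_k)\,\big]\Big)\,\D s .
\]
On the vertical contour the real part of the exponent is maximised at the real point $s=c$, so Laplace's method in $s$ applies, with saddle at the real $q^\ast\ge\b\bar\l$ solving $R^2=\tfrac1{2N}\sum_k(q^\ast-\b\l_k)^{-1}$ (and $q^\ast=\b\bar\l$ when this equation has no root $>\b\bar\l$, i.e.\ in the condensed regime $R^2\b\bar\l\ge1$). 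By \textbf{H}(2) together with the absence of outliers from \textbf{H}(1), $\tfrac1N\sum_k\log(q-\b\l_k)\to\int\r(\l)\log(q-\b\l)\,\D\l$ a.s., uniformly for $q$ in compact subsets of $(\b\bar\l,\infty)$; the limiting map $q\mapsto qR^2-\tfrac12\int\r(\l)\log(q-\b\l)\,\D\l$ is convex on $(\b\bar\l,\infty)$ and extends continuously to $q=\b\bar\l$ (the semicircle density vanishes like a square root at the edge, so the integral still converges there), so its infimum over $q\ge\b\bar\l$ is attained. Evaluating the prefactor $\Gamma(N/2)(R\sqrt N)^{2-N}$ by Stirling produces precisely the additive constants $-\tfrac12\log R^2-\tfrac12-\tfrac12\log 2$, and \eqref{eq:A-sferico} follows; in particular the limit $A^{sf}(\b,R^2)$ exists.

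\medskip

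\noindent\emph{The duality.} Slicing \eqref{eq:Zg} over the spheres $\|z\|^2=R^2N$ and using that $H_N$ is a quadratic form (so on each sphere it depends only on the direction of $z$) while the Gaussian, quartic and $\l$ terms depend on $z$ only through $\|z\|^2$, one gets the exact identity
\[
Z^{g}_N(\b,\l)=\int_0^\infty\!\D(R^2)\;\frac{2^{-N/2}N^{N/2}}{\Gamma(N/2)}\,(R^2)^{N/2-1}\,e^{\,N\big(\frac{\l-1}{2}R^2-\frac{\b^2}{4}R^4\big)}\,Z^{sf}_N(\b,R).
\]
Taking $N^{-1}\log$, using Stirling on $\Gamma(N/2)$ (which yields the constants $\tfrac12\log R^2+\tfrac12$) and the a.s.\ convergence $N^{-1}\log Z^{sf}_N(\b,R)\to A^{sf}(\b,R^2)$ --- which is locally uniform in $R$ because $R^2\mapsto\log Z^{sf}_N(\b,R)$ is convex (a logarithm of a moment generating function) and the limit is continuous --- Laplace's method in $R^2$ gives \eqref{eq:A-gauss}. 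The supremum there is attained in $(0,\infty)$: as $R^2\downarrow0$ the bracket tends to $-\infty$ since $A^{sf}(\b,R^2)+\tfrac12\log R^2\to-\infty$ by \eqref{eq:A-sferico}, while as $R^2\to\infty$ the quartic term dominates ($A^{sf}(\b,R^2)\le\b\bar\l R^2+C$ by taking $q=\b\bar\l$ in \eqref{eq:A-sferico}), and on the good event $Z^{sf}_N(\b,R)\le e^{\b\|J/\sqrt N\|_{\mathrm{op}}R^2N}$ controls the tail of the integral.

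\medskip

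\noindent\emph{Main obstacle.} The delicate point is the rigorous saddle-point estimate in the spherical part. The exponent $s\mapsto sR^2-\tfrac1{2N}\sum_k\log(s-\b\l_k)$ carries $N$ random branch points $\b\l_k$ densely filling $[-\b\bar\l,\b\bar\l]$, so near the condensation threshold the saddle $q^\ast$ pinches the boundary $\b\bar\l$ of the domain of analyticity and the quadratic approximation at the saddle degenerates, its curvature $\tfrac1{2N}\sum_k(q^\ast-\b\l_k)^{-2}$ possibly blowing up. One must show that the sub-exponential corrections stay $o(N)$ uniformly across this transition and throughout the condensed phase, and this is precisely where the square-root edge behaviour of the semicircle law and the a.s.\ control of the extreme eigenvalues from \textbf{H}(1) enter; the analogous estimate is the one handled with care in \cite{KT,CS,Tsfer,BDG}. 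By contrast, the duality part is a soft Laplace-method argument once concentration and the spherical formula are available.
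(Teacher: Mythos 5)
Your proof of the duality \eqref{eq:A-gauss} follows essentially the same route as the paper: slice $\R^N$ into spherical shells, extract the radial prefactor, and apply Laplace's method in $R^2$. Your exact co-area identity is a clean repackaging of what the paper does with mollified shells and an interchange-of-limits lemma; your tail control via the quartic term $-\tfrac{\b^2}{4}R^4$ replaces the paper's Chernoff estimate on $Z^g_{\|z\|^2\geq R^2N^{1+\d}}$, and local-uniform convergence of $\tfrac1N\log Z^{sf}_N(\b,R)$ obtained from convexity in $R^2$ (equivalently in $\b R^2$, via the scaling $Z^{sf}_N(\b,R)=Z^{sf}_N(\b R^2,1)$ that the paper also uses) is a legitimate substitute for the paper's uniform shell estimates. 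This half is sound, granted the spherical formula.

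The spherical formula is where you diverge from the paper --- you take the Berlin--Kac contour/steepest-descent route that the paper explicitly names and deliberately avoids --- and it is where the genuine gap sits, exactly where you flag it. The contour representation yields the upper bound cheaply (bound the modulus of the integrand on the vertical line through any $c>\b\lambda_{\max}$), but the lower bound requires a bona fide saddle-point estimate, i.e.\ a proof that there is no cancellation. In the condensed phase $\b\bar\l R^2\geq 1$ the finite-$N$ saddle $q^*_N$ is pinned at distance $O(1/N)$ from the branch point $\b\lambda_{\max}$, the curvature $\tfrac1{2N}\sum_k(q^*_N-\b\l_k)^{-2}$ is of order $N$, and one must additionally control the fluctuations of $\lambda_{\max}$ entering $\tfrac1{2N}\log(q^*_N-\b\lambda_{\max})$. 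Deferring this to \cite{KT}, \cite{CS}, \cite{Tsfer}, \cite{BDG} does not close it: \cite{KT} is non-rigorous at precisely this point, and \cite{BDG} resolve it by a different mechanism (large deviations of the empirical spectral measure with separate treatment of the top eigenvalue), not by steepest descent. The paper's variational argument is built to bypass the issue entirely: the upper bound comes from extending the positive spherical integrand to all of $\R^N$ for each fixed $q>\b\bar\l$ and minimizing in $q$; the lower bound comes from writing $\e Z^{sh}_{\e N}$ as the full Gaussian integral minus a Chernoff-controlled complement and choosing $q=\bar q$ so that the two correction exponents are strictly dominated by $\tilde A(\bar q)$ --- no contour deformation, no curvature estimate, and the condensed phase is covered by the same two-sided bound since $\tilde A$ is then merely minimized at the endpoint $q=\b\bar\l$. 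Either import that variational two-sided bound, or write out the steepest-descent lower bound across the transition in full; as it stands, the hardest part of the theorem is asserted rather than proved.
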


\vspace{0.5cm}

\begin{remark}
Formula (\ref{eq:A-gauss}) can be interpreted as an ordinary Legendre transformation between $A^g(\b, \l)$ and $A^{sf}(\b, R^2)$. As it can be easily verified, the Legendre transformation is well-defined and involutive, and we have the inverse formula
\be\label{eq:dual-rel}
A^{sf}(\b,R^2)=\min_{\l\in\R}\left(A^g(\b,\l) +\frac{\b^2R^4}{4}+\frac{(1-\l)R^2}{2}-\log (R)-\frac{1}{2} \right).
\ee
In this way the duality between the two models is completely specified.
\end{remark}

\begin{remark}
By direct calculations from (\ref{eq:A-sferico}) we obtain the following explicit expressions for $A^{sf}(\b, R)$:
\be\label{eq:A_sf-expl}
A^{sf}(\b, R^2)=\left\{
\begin{array}{ccc}
\frac14\left(\frac{\b}{\b_c}\right)^2&\quad&\b\leq\b_c,\\
\frac{\b}{\b_c}-\frac12\log\left(\frac{\b}{\b_c}\right)-\frac34&\qquad&\b\geq\b_c,
\end{array}\right.
\ee
with a discontinuity of the third derivative in $\b_c:=(\bar\l R^2)^{-1}$; then, by plugging (\ref{eq:A_sf-expl}) in (\ref{eq:A-gauss}), after the optimisation we get
\be \label{gaussian free energy}
A^g(\b, \l)=\begin{cases}
 -\frac {1}{2} \log(1-\l)   & \text{$\b<1-\l$},\\
 -\frac{1}{2}\log(\b)+\frac{\b\bar{q}}{2}+\frac{\b^2\bar{q}^2}{4},   & \text{$\b\geq1-\l$},
\end{cases}
\ee 
with $\bar{q}(\b,\l):=\frac{\b-(1-\l)}{\b^2}$. 
\end{remark}

\ni Therefore, by a direct comparison with the results in \cite{BGGT}, we can conclude that both the spherical and the Gaussian models are entirely replica symmetric.

\vspace{0.5cm}

\ni In Section \ref{sec:proof} we will prove the theorem. Our strategy consists of three steps:

\vspace{0.2cm}

\begin{enumerate}
\item We obtain the variational formula (\ref{eq:A-sferico}) for the pressure of the spherical model;
\item We introduce suitable extensions and restrictions respectively of the Spherical and Gaussian model to a spherical shell and show the uniform convergence of the pressure of the spherical shell model to the spherical one;
\item We prove the Legendre duality (\ref{eq:A-gauss}), by slicing $\R^N$ into spherical shells and proving concentration of the Gaussian Gibbs measure on a particular one. This allows to relate the pressures of the Gaussian and spherical models on the shells and then to get (\ref{eq:A-gauss}), taking properly the limits.
\end{enumerate}

\vspace{0.2cm}

\ni Finally, in Section \ref{sec:concl}, we will discuss some implications of our result on the representation of the pressure for the Hopfield model obtained in \cite{NN?} and we will add some conclusive remarks. 

\vspace{0.5cm}

\ni With a little abuse of notation, throughout the paper we will indicate the random Gibbs state always with $\w$, without distinguish the Gaussian or spherical Gibbs measure. However it will be always clear by the context to which one this symbol is referred. Furthermore the area of the spherical surface of radius $R$ in $\R^N$ $S_R$ will be denoted by $|S_R|$.

\vspace{0.5cm}

\section{Proof of the Theorem}\label{sec:proof}

\vspace{0.3cm}
\subsection{Pressure of the Spherical Model}

\ni Our first goal is to obtain formula (\ref{eq:A-sferico}), that appeared originally in \cite{KT}. 

\ni In primis we diagonalise the interaction, as it is usual for this kind of models \cite{KT}\cite{BDG}\cite{FS}, in virtue of their rotational symmetry:
\be
H_N\longrightarrow-\sum_i \l_i z_i^2.
\ee

\ni Let us introduce the annealed pressure
\be\label{eq:Asf-ann-DEF}
a^{sf}(\b,R):=\lim_N\frac1N\log\E Z_N^{sf}.
\ee

\ni To begin with, we get a rough bound on the annealed pressure, and so (by Jensen inequality) to the quenched one.
It will result helpful to define the events
$$
\mathcal{B}_a:=\left\{\max_{i=1,\dots,N}|\l_i|< a\right\}
$$
for fixed $N$ (that we omit in the notations) and each $a\geq\bar\l$, and its complementary $\mathcal{B}^c_a$. In the following we will denote with $\mathcal{I}_B$ the indicator function of the set $B$. Thus we have the following
\begin{proposition}\label{lem-annealed-sf}
The pressure of the spherical model is bounded and
\be\label{eq:Ann-Bound-SF}
\limsup_NA_N^{sf}\leq a^{sf}\leq\max\left(\b\bar\l R^2,\frac{\b^2R^4}{4\theta}\right).
\ee
\end{proposition}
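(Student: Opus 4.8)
The plan is to diagonalize the interaction and then estimate $\E Z_N^{sf}$ directly by a Laplace-type computation on the sphere, splitting according to whether the spectral radius is controlled or not. After diagonalizing, $Z_N^{sf}=\int_{S_{R\sqrt N}}d\sigma_{R,N}(z)\,\exp\bigl(\beta\sum_i\lambda_i z_i^2\bigr)$, where the $\lambda_i$ are the eigenvalues of $J_{ij}/\sqrt N$. First I would write the uniform measure on $S_{R\sqrt N}$ using the standard Gaussian representation: sampling $g\sim\gamma_N$ and setting $z=R\sqrt N\, g/\|g\|$ gives a point uniform on the sphere, and $\|g\|^2/N\to1$ by the law of large numbers. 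This lets me bound the spherical integral by a Gaussian integral of the form $\int_{\R^N}\gamma_N(dg)\exp\bigl(\beta R^2 N\sum_i\lambda_i g_i^2/\|g\|^2\bigr)$ up to controllable corrections, or — more cleanly for an upper bound — by directly using that on the sphere $\sum_i\lambda_i z_i^2\le (\max_i|\lambda_i|)\,\|z\|^2=(\max_i|\lambda_i|)\,R^2 N$.

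The second step is the split over $\mathcal B_a$ and $\mathcal B_a^c$ for a parameter $a\ge\bar\lambda$ to be optimized. On $\mathcal B_a$ we have $\sum_i\lambda_i z_i^2\le a R^2 N$ pointwise on the sphere, so $Z_N^{sf}\,\mathcal I_{\mathcal B_a}\le e^{\beta a R^2 N}$, giving the contribution $\beta a R^2$ to $\frac1N\log\E Z_N^{sf}$. On $\mathcal B_a^c$ I would use Cauchy–Schwarz (or Hölder) in the disorder expectation: $\E\bigl[Z_N^{sf}\,\mathcal I_{\mathcal B_a^c}\bigr]\le \bigl(\E (Z_N^{sf})^2\bigr)^{1/2} P(\mathcal B_a^c)^{1/2}$, and then control $P(\mathcal B_a^c)\le \sum_i P(|\lambda_i|\ge a)\le C_1 N e^{-\theta a^2 N}$ by a union bound using \eqref{eq:P(l>a)}. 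For the second moment I would bound $Z_N^{sf}\le \exp\bigl(\beta\,\mathrm{Tr}^+\! (\cdot)\bigr)$ crudely or, better, integrate the Gaussian moment generating function: since each $(J_{ij}/\sqrt N)^2$ has finite exponential moment by Hypothesis \textbf{H}, one gets $\E(Z_N^{sf})^2\le e^{c N}$ for some constant $c$ depending on $\beta,R$. Combining, the $\mathcal B_a^c$ contribution to $\frac1N\log\E Z_N^{sf}$ is at most $\tfrac12 c-\tfrac12\theta a^2+o(1)$, and optimizing the balance between $a$ and the subexponential prefactors, the exponential rate from the bad event behaves like $\tfrac{\beta^2R^4}{4\theta}$ after choosing $a$ of order $\beta R^2/(2\theta)$; taking the max of the two contributions yields $a^{sf}\le\max\bigl(\beta\bar\lambda R^2,\tfrac{\beta^2R^4}{4\theta}\bigr)$. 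Finally $\limsup_N A_N^{sf}\le a^{sf}$ is just Jensen's inequality $\E\log Z\le\log\E Z$.

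The main obstacle I expect is making the $\mathcal B_a^c$ estimate genuinely subdominant: one needs the second-moment bound $\E (Z_N^{sf})^2\le e^{cN}$ to hold with a constant $c$ independent of $a$, and one must check that replacing the spherical integral by a Gaussian one does not reintroduce an unbounded $\|g\|$-fluctuation (the factor $N/\|g\|^2$ could be large on a set of small but not exponentially small Gaussian probability). The clean way around this is to keep the argument entirely on the sphere for the pointwise bound $\sum\lambda_i z_i^2\le a R^2 N$ on $\mathcal B_a$, and only invoke the Gaussian representation for the crude second-moment estimate on $\mathcal B_a^c$, where exponential integrability from Hypothesis \textbf{H} does all the work; the precise tuning of $a$ versus $\theta$ to land on the constant $\tfrac{\beta^2R^4}{4\theta}$ is then a short optimization.
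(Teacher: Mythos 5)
Your overall skeleton---diagonalise the interaction, use Jensen to get $\limsup_N A_N^{sf}\le a^{sf}$, split the disorder expectation over $\mathcal{B}_a$ and $\mathcal{B}_a^c$, and bound $e^{\beta\sum_i\lambda_i z_i^2}\le e^{\beta a R^2 N}$ pointwise on the sphere on the good event---is exactly the paper's. The gap is in your treatment of $\mathcal{B}_a^c$, in two places. First, the second-moment bound $\mathbb{E}[(Z_N^{sf})^2]\le e^{cN}$ is not delivered by entrywise exponential integrability as you suggest: writing $-\beta H=\sum_{(i,j)}(\beta z_iz_j)\,(J_{ij}/\sqrt N)$ and applying the sub-Gaussian MGF bound $\mathbb{E}\,e^{tJ_{ij}/\sqrt N}\le e^{t^2/4\theta+\dots}$ entry by entry produces a factor $\exp\bigl(\tfrac{\beta^2}{4\theta}\sum_{(i,j)}z_i^2z_j^2\bigr)$, and on the sphere $\sum_{(i,j)}z_i^2z_j^2$ can be of order $N^2$ (take $z$ supported on two coordinates), so this route yields $e^{O(N^2)}$, not $e^{O(N)}$; to get $e^{cN}$ one has to go back through the spectral tail bound \eqref{eq:P(l>a)}, i.e.\ essentially perform the estimate you are trying to avoid. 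Second, even granting $\mathbb{E}[(Z_N^{sf})^2]\le e^{cN}$, the single-level Cauchy--Schwarz split gives a bad-event rate $\tfrac{c}{2}-\tfrac{\theta a^2}{2}$, and minimising $\max\bigl(\beta a R^2,\tfrac{c}{2}-\tfrac{\theta a^2}{2}\bigr)$ over $a$ does not produce $\tfrac{\beta^2R^4}{4\theta}$: your proposed choice $a\sim\beta R^2/2\theta$ already makes the good-event term equal to $\tfrac{\beta^2R^4}{2\theta}$, twice the target, while the crossing point of the two rates gives a strictly larger constant (e.g.\ $(\sqrt2-1)\beta^2R^4/\theta$ when the quadratic term dominates). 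The trouble is that your two contributions are coupled through the single parameter $a$, which prevents you from simultaneously sending $a\downarrow\bar\lambda$ in the good event and sitting at the saddle in the bad one.

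The paper decouples them by slicing $\mathcal{B}_a^c$ into layers $B_k=\{ak\le\max_i|\lambda_i|\le a(k+1)\}$, $k\ge1$, bounding the integrand by $e^{\beta R^2 N a(k+1)}$ on each layer and invoking \eqref{eq:P(l>a)} at each level $ak$. The resulting sum $\sum_{k\ge1}e^{N(\beta R^2 ak-\theta a^2k^2)}$ is controlled by completing the square: its exponential rate is $\sup_{x>0}(\beta R^2x-\theta x^2)=\tfrac{\beta^2R^4}{4\theta}$, independently of $a$, with only a subexponential prefactor; one then takes the maximum with the good-event rate $\beta a R^2$ and sends $a\downarrow\bar\lambda$ at the very end. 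If all you needed were finiteness of $a^{sf}$ (which is how the Proposition is used later), a repaired version of your argument could suffice, but as written it neither justifies the second-moment input nor lands on the stated constant.
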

\begin{proof}
Let us fix any $a>\bar\l$. It is
$$
\E[e^{\b\sum_i\l_i z_i^2}]=\E[e^{\b\sum_i\l_i z_i^2}\mathcal{I}_{\B_a}]+\E[e^{\b\sum_i\l_i z_i^2}\mathcal{I}_{\B^c_a}].
$$
For the first addendum on the r.h.s. we easily get the bound
\be
\E[e^{\b\sum_i\l_i z_i^2}\mathcal{I}_{\B_a}]\leq e^{\b R^2 aN}.
\ee
For the second one we decompose $\B^c_a=\bigcup_{k\geq1} B_k$ with
$$
B_k:=\left\{ak\leq\max_{i=1,\dots,N}|\l_i|\leq a(k+1)\right\}.
$$
Therefore using (\ref{eq:P(l>a)}) we have
\bea
\E[e^{\b\sum_i\l_i z_i^2}\mathcal{I}_{\B^c_a}]&=&\sum_{k\geq1} \E[e^{\b\sum_i\l_i z_i^2}\mathcal{I}_{B_k}]\nn\\
&\leq&\sum_{k\geq1} e^{\b R^2Nak} \E[\mathcal{I}_{B_k}]\nn\\
&\leq&C_1\sum_{k\geq1} e^{\b R^2Nak-\theta N a^2k^2}\nn\\
&\leq&C_1a\sqrt{2\pi\theta N}e^{\b^2 R^4N/4\theta}\nn.
\eea

\ni Hence, neglecting the terms vanishing in the limit, we readily get for every $a>\bar\l$
\bea
a^{sf}(\b,R)&\leq&\lim_N\frac1N\log\left(e^{\b R^2 aN}+C_1a\sqrt{2\pi\theta N}e^{\b^2 R^4N/4\theta}\right) \nn\\
&=&\max\left(\b R^2 a,\frac{\b^2R^4}{4\theta}\right)\nn,
\eea
The inequality is satisfied also by taking the infimum on $a>\bar\l$, whence (\ref{eq:Ann-Bound-SF}) follows.
\end{proof}

\ni The first consequence of the previous Proposition is the following
\begin{lemma}\label{lemma:dis-conc}
We have
\be\label{eq:lemma2}
\mathbb{E}[A_N^{sf}(\b,R)\mathcal{I}_{\mathcal{B}^c_{\bar\l}}]=\OOO{e^{-N}}.
\ee
\end{lemma}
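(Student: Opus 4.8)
The statement asserts that the contribution to the quenched pressure coming from the exceptional event $\mathcal{B}^c_{\bar\lambda}$, on which some eigenvalue exceeds the bulk edge $\bar\lambda$, is exponentially negligible. The plan is to estimate $\mathbb{E}[A_N^{sf}\,\mathcal{I}_{\mathcal{B}^c_{\bar\lambda}}]$ by combining a deterministic bound on $A_N^{sf}$ itself with the exponential smallness of $P(\mathcal{B}^c_{\bar\lambda})$, handled through a Cauchy--Schwarz splitting so that the (possibly mildly growing) second moment of $A_N^{sf}$ does not interfere.

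First I would record a crude two-sided pointwise bound on $\frac1N\log Z_N^{sf}$. The diagonalised Hamiltonian gives $-\beta H_N(z,J)=\beta\sum_i\lambda_i z_i^2$, and on the sphere $\|z\|^2=R^2N$ one has the trivial inequality $|\beta\sum_i\lambda_i z_i^2|\leq \beta R^2 N\max_i|\lambda_i|$, hence $|A_N^{sf}|\leq \beta R^2\max_i|\lambda_i|$ pointwise (the spherical measure being a probability measure contributes nothing). On $\mathcal{B}^c_{\bar\lambda}$ I would then decompose as in Proposition~\ref{lem-annealed-sf}, writing $\mathcal{B}^c_{\bar\lambda}=\bigcup_{k\geq1}B_k$ with $B_k=\{\bar\lambda k\leq\max_i|\lambda_i|\leq\bar\lambda(k+1)\}$, so that on $B_k$ we have $|A_N^{sf}|\leq \beta R^2\bar\lambda(k+1)$, while $P(B_k)\leq C_1 e^{-\theta\bar\lambda^2 k^2 N}$ by \eqref{eq:P(l>a)}. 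Therefore
\be
\bigl|\mathbb{E}[A_N^{sf}\,\mathcal{I}_{\mathcal{B}^c_{\bar\lambda}}]\bigr|\leq\sum_{k\geq1}\beta R^2\bar\lambda(k+1)\,C_1 e^{-\theta\bar\lambda^2 k^2 N},
\ee
and the geometric-type decay of the series shows the sum is dominated by its first term $\sim e^{-\theta\bar\lambda^2 N}$, up to a polynomial-in-$N$ prefactor absorbed into the $O(\cdot)$, which yields \eqref{eq:lemma2}.

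One subtlety is that the pointwise bound $|A_N^{sf}|\leq\beta R^2\max_i|\lambda_i|$ implicitly assumed $Z_N^{sf}$ is not too small; a cleaner route, avoiding any lower bound on $Z_N^{sf}$, is to use Cauchy--Schwarz: $\mathbb{E}[A_N^{sf}\,\mathcal{I}_{\mathcal{B}^c_{\bar\lambda}}]\leq(\mathbb{E}[(A_N^{sf})^2])^{1/2}\,P(\mathcal{B}^c_{\bar\lambda})^{1/2}$, where $P(\mathcal{B}^c_{\bar\lambda})\leq\sum_{k\geq1}C_1 e^{-\theta\bar\lambda^2 k^2 N}=O(e^{-\theta\bar\lambda^2 N})$ and the second moment $\mathbb{E}[(A_N^{sf})^2]$ is bounded uniformly in $N$ (or grows at most polynomially) using the moment hypothesis in \textbf{H}: indeed $\log Z_N^{sf}\leq \beta R^2 N\max_i|\lambda_i| + O(\log N)$ and $\mathbb{E}[e^{2\theta'(\beta R^2\max_i|\lambda_i|)^2}]<\infty$ for small $\theta'$ by a union bound over the sub-Gaussian entries, which controls all polynomial moments. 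Either way the product of an exponentially small probability with an at-most-polynomially-growing factor is $O(e^{-N})$ (after possibly rescaling constants so that $\theta\bar\lambda^2\geq1$, which is harmless since the statement only claims $O(e^{-N})$).

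The main obstacle, such as it is, is purely bookkeeping: making sure the bound on $|A_N^{sf}|$ (or on its moments) on the slices $B_k$ is only \emph{linear} in $k$ while the probability decays like $e^{-\theta\bar\lambda^2 k^2 N}$, so that the resulting series converges and is controlled by its leading term. There is no analytic difficulty beyond reusing the Gaussian-tail estimate \eqref{eq:P(l>a)} already established, together with the exponential-moment assumption in hypothesis \textbf{H} to tame the second moment in the Cauchy--Schwarz step.
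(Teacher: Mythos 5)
Your argument is correct, and your primary route is genuinely different from, and more elementary than, the paper's. The paper proceeds by Cauchy--Schwarz, $(\E[A_N^{sf}\mathcal{I}_{\B^c_{\bar\l}}])^2\leq P(\B^c_{\bar\l})\,\E[(A_N^{sf})^2]$, and then controls the second moment by writing it as $\Var(A_N^{sf})+(\E[A_N^{sf}])^2$: the variance is handled by a Poincar\'e inequality with respect to the eigenvalue distribution, using $\frac1N\partial_{\l_i}\log Z_N^{sf}=\frac{\b}{N}\w(z_i^2)$ to get $\|\nabla A_N^{sf}\|_2^2\leq\b^2R^4$, and the mean is bounded by the annealed estimate of Proposition~\ref{lem-annealed-sf}. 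Your main argument instead combines the deterministic two-sided bound $\left|\frac1N\log Z_N^{sf}\right|\leq\b R^2\max_i|\l_i|$ with the slicing of $\B^c_{\bar\l}$ into the sets $B_k$, which bypasses both the Poincar\'e inequality and Proposition~\ref{lem-annealed-sf}; what it buys is a self-contained, purely pointwise estimate, at the cost of redoing the summation over slices already carried out in Proposition~\ref{lem-annealed-sf}. Your fallback Cauchy--Schwarz variant is essentially the paper's proof with the Poincar\'e step replaced by a direct moment bound. Two small remarks. First, your worry that the pointwise bound presupposes a lower bound on $Z_N^{sf}$ is unfounded: since $\s_{R,N}$ is a probability measure and $|\b\sum_i\l_iz_i^2|\leq\b R^2N\max_i|\l_i|$ on its support, one has $e^{-\b R^2N\max_i|\l_i|}\leq Z_N^{sf}\leq e^{\b R^2N\max_i|\l_i|}$, so the first route stands on its own. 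Second, both you and the paper apply (\ref{eq:P(l>a)}) at the boundary value $a=\bar\l$ and record the final rate as $e^{-N}$ rather than the $e^{-c\theta\bar\l^2N}$ that the estimates actually produce; these imprecisions are inherited from the statement of the Lemma and are not defects introduced by your argument.
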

\begin{proof}
At first we note that the pressure (before expectation) has bounded gradient w.r.t. the $\l_i$ with full probability, since
\be\label{eq:bound-grad-A-SF}
\frac1N\partial_{\l_i} \log Z_N^{sf}=\frac\b N\omega(z^2_i).
\ee
Then we have
\bea
(\E[A^{sf}_N\mathcal{I}_{\B^c_{\bar\l}}])^2&\leq& P(\B^c_{\bar\l}) \E[A_N^2]\nn\\
&=&P(\B^c_{\bar\l})(\Var_{\r_N}(A^{sf}_N)+(\E[A_N^{sf}])^2)\nn\\
&\leq&P(\B^c_{\bar\l})(\|\nabla A^{sf}_N\|_2^2+(\E[A_N^{sf}])^2))\nn\\
&\leq&P(\B^c_{\bar\l})(\b^2 R^4+a^2_{sf})\nn,
\eea
where we have exploited Poincar\'e inequality w.r.t. the empirical measure of the eigenvalues and (\ref{eq:bound-grad-A-SF}). Due to Proposition $\ref{lem-annealed-sf}$, we obtain (\ref{eq:lemma2}) by the decay (\ref{eq:P(l>a)}).
\end{proof}

\begin{remark}
Because of Lemma \ref{lemma:dis-conc} the partition function can be easily bounded by
$$
Z_N^{sf}(\b,R)\leq e^{\b\bar\l R^2 N}.
$$
Therefore
$$
\frac1N\mathbb{E}\left[\log Z^{sf}_N(\b,R)\Big|\max_{i=1,...,N}|\l_i|\leq\bar \l\right]\leq\bar\l R^2,
$$
and so
\be\label{eq:boundSF}
\limsup_NA^{sf}_N(\b,R)\leq\b\bar\l R^2.
\ee
\ni This bound, albeit still rather course, refines the annealed one for small temperature.
\end{remark}

\ni However we can do much better than that, proving directly formula (\ref{eq:A-sferico}). In the past literature, the common way to face this kind of problems relied on a direct calculation of the partition function. Several techniques can be implemented for this task: the original one by Berlin-Kac (see \cite{BK}, appendix B) and a variant by Montroll \cite{Mon} make use essentially of Riemann steepest descendent method; alternatively, the moment expansion method developed by Von Neumann in \cite{VN} certainly deserves to be mentioned. Here we present a different and purely variational proof, which captures in our opinion the two essential aspects of the model: 1) only the largest eigenvalue determines the form of the free energy; 2) thermodynamics naturally forces the equilibrium configurations of the system on the sphere, even if we relax the spherical constraint. 

\begin{proof}[Proof of (\ref{eq:A-sferico})]

\ni From Lemma \ref{lemma:dis-conc}, we can limit ourself to consider only realisation of the disorder with spectrum contained in an interval $[-\bar\l,\bar\l]$ with full probability. Then for every $q>\b\bar\l$ we have
\bea
Z_{N}^{sf}(\b,R)&=&e^{q R^2 N}\int_{\R^N}d\s_{R,N}(z)e^{-\sum_i^N(q-\b\l_i)z_i^2}\nn\\
&\leq&e^{q R^2 N}\frac{(2\pi)^{\frac N2}}{|S_{R\sqrt{N}}|}\int_{\R^N}\frac{d^Nz}{(2\pi)^{\frac N2}} e^{-\sum_i^N(q-\b\l_i)z_i^2}\nn\\
&=&e^{q R^2 N}\frac{(2\pi)^{\frac N2}}{|S_{R\sqrt{N}}|}e^{-\frac 12\sum_i^N\log2(q-\b\l_i)}\nn
\eea
and so
\be
\limsup_NA^{sf}_N(\b)\leq q R^2 -\frac 12\int\r(\l)\log(q-\b\l)-\log R-\frac12-\frac12\log2=:\tilde A(q).
\ee

\ni We note that for $q>\b\bar\l$
$$
\partial_q^2 \tilde A(q)=\frac12\int d\l\frac{\r(\l)}{(q-\b\l)^2}>0,
$$
and so the functional $\tilde A(q)$ is uniformly convex (independently on $R$). Furthermore we can explicitly verify that $\tilde A(q)$ is continuous for $q\to\b\bar\l$, thus
$$
\limsup_NA^{sf}_N(\b)\leq\min_{q\geq\b\bar\l}\tilde A(q).
$$
Since the functional on the r.h.s. is uniformly convex in $q$, there is a unique point $\bar q$ where the minimum is attained. 

\ni The reverse bound is slightly less direct. Let us consider for each $\e>0$ the spherical shell around the radius $R\sqrt{N}$ (as defined in (\ref{eq:shell})) and its complementary set, denoted by ${S^\e}^c$. Since $S_{R\sqrt{N}}^\e\cup {S^\e}^c=\R^N$, it holds
$$
\e Z^{sh}_{\e N}=e^{\bar q R^2 N}\frac{(2\pi)^{\frac N2}}{|S_{R\sqrt{N}}|}\int_{\R^N}\frac{d^Nz}{(2\pi)^{\frac N2}} e^{-\sum_i^N(q-\b\l_i)z_i^2}-e^{\bar q R^2 N}\frac{(2\pi)^{\frac N2}}{|S_{R\sqrt{N}}|}\int_{{S^\e}^c}\frac{d^Nz}{(2\pi)^{\frac N2}}z e^{-\sum_i^N(q-\b\l_i)z_i^2}.
$$
We can use the Chernoff bound to estimate the second addendum. In fact, for every $\mu,\eta>0$ we have
\bea
\int_{{S^\e}^c}\frac{d^Nz}{(2\pi)^{\frac N2}}z e^{-\sum_i^N(q-\b\l_i)z_i^2}&\leq&\exp\left[N\left(\mu\left(R^2-\frac\e N\right)-\frac{1}{2N}\sum_j\log(q-\b\l_j+\mu)-\frac12\log2\right)\right]\nn\\
&+&\exp\left[N\left(-\eta\left(R^2+\frac\e N\right)-\frac{1}{2N}\sum_j\log(q-\b\l_j-\eta)-\frac12\log2\right)\right]\nn.
\eea
\ni The r.h.s. of the last inequality is $o(e^{-N})$ if $\frac\e N\to\infty$ when $N\to\infty$, so it does not contribute to the thermodynamics. Therefore we can neglect $\e$ growing at those scales, by setting $\e=\tilde\e N$, $\tilde\e$ positive and independent by $N$. Thus, by a straightforward computation, we get as $N\to\infty$ 
\be\label{eq:max( , , )}
\liminf_{N}A^{sh}_{\tilde\e,N}\geq\max\left(\tilde A(q), A^1_{\tilde\e}(\mu; q), A^2_{\tilde\e}(\eta; q)\right),
\ee
with
\bea
A^1_{\tilde\e}(\mu; q)&=&(q+\mu)R^2-\tilde\e\mu-\frac12\int d\l\r(\l)\log(q-\b\l+\mu)-\frac12-\frac12\log2-\log R;\\
A^2_{\tilde\e}(\eta; q)&=&(q-\eta)R^2-\tilde\e\eta-\frac12\int d\l\r(\l)\log(q-\b\l-\eta)-\frac12-\frac12\log2-\log R.
\eea
Our aim is to show that, for some $q$, $\tilde A(q)$ is greater than these other two quantities. We have
\bea
\D_1(q;\mu)&:=&\tilde A(q)-A^1_{\tilde\e}(\mu; q)= -\mu(R^2-\tilde\e)+\frac12\int d\l\r(\l)\log\left(\frac{q-\b\l+\mu}{q-\b\l}\right);\\
\D_2(q;\mu)&:=&\tilde A(q)-A^2_{\tilde\e}(\eta; q)= \eta(R^2+\tilde\e)+\frac12\int d\l\r(\l)\log\left(\frac{q-\b\l-\eta}{q-\b\l}\right).
\eea
\ni Let us regard for instance to $\D_1(q;\mu)$ as a function of $\mu$: it is continuos and derivable, it vanishes in $\mu=0$ and it goes to $-\infty$ for $\mu\to+\infty$. So it can assume positive values (in particular a positive maximum) if and only if the derivative in $\mu=0$ is positive, that is
\be\label{eq:upper-bound-epsi1}
0<-(R^2-\tilde\e)+\frac12\int d\l\frac{\r(\l)}{q-\b\l}=\tilde \e-\partial_q \tilde A(q),
\ee
where we have used that in the $\mu-$ derivative of $\D_1(q;\mu)$ it appears exactly the derivative $\partial_q \tilde A(q)$ (see the explicit analysis below).

\ni Analogously  $\D_2(q;\eta)$ is zero in the origin and it approaches $+\infty$ for $\eta\to+\infty$. Thus it is always positive, provided that $\left.\partial_\eta \D_2(q;\eta)\right|_{\eta=0}\geq0$, \ie
\be\label{eq:upper-bound-epsi2}
0\leq(R^2+\tilde\e)+\frac12\int d\l\frac{\r(\l)}{q-\b\l}=\tilde \e+\partial_q \tilde A(q).
\ee
Conditions (\ref{eq:upper-bound-epsi1}) and (\ref{eq:upper-bound-epsi2}) must be satisfied together; so we seek a $\tilde q$ such that for any $\tilde\e>0$ it is
$$
-\tilde\e\leq \left.\partial_q \tilde A(q)\right|_{q=\tilde q}<\tilde\e.
$$
This simply means that $\tilde q=\bar q$, viz. the unique stationary point of $\tilde A(q)$. With this choice of $q$, relation (\ref{eq:max( , , )}) gives
\be
\liminf_N A^{sh}_{\tilde \e,N}(\b)\geq \min_{q\geq\b\bar\l}\tilde A(q)\quad{\tilde\e\mbox{ uniformly}},
\ee
so we can send $\tilde\e\to0$ obtaining (\ref{eq:A-sferico}).

\ni Let us look more thoroughly to $\tilde A(q)$. The first derivative reads
$$
\partial_q \tilde A(q)=R^2-\frac12\int d\l\frac{\r(\l)}{q-\b\l};
$$
hence we see that for $\b<(\bar\l R^2)^{-1}$ the derivative changes sign from negative to positive in a point $\bar q$ given by the equation
\be\label{eq:q-Sferico}
\sqrt{\left(\frac{\bar qR^2}{2}\right)^2-\left(\frac{\b\bar\l R^2}{2}\right)^2}=\frac{\bar qR^2}{2}-\frac{\b^2\bar\l^2 R^4}{2},
\ee 
which is solved by $\bar q=\frac{1}{2R^2}\left(1+\b^2\bar\l^2 R^4\right)$.

\ni On the other hand for $\b>\frac{1}{\bar\l R^2}$ the derivative is always a positive function (equation (\ref{eq:q-Sferico}) is never satisfied). This means that the minimum is attained in the extremum of the interval of definition, \ie $\bar q=\b\bar\l$. Thus we have that the critical point is defined by $\b_c=\frac{1}{\bar\l R^2}$ as a singular point of the minimiser function of $\tilde A(q)$. 

\ni Finally we notice that, since $A_N$ is a convex and Lipschitz continuous function of $\l_i$, with a certain constant $L(\bar\l)$, it has to satisfy Talagrand inequality
\be\label{eq:A-sf-a.s.Tal}
P\left(\left|\frac1N\log Z_N^{sf}-A_N\right|\geq\e\right)\leq e^{-N\frac{\e^2}{L(\bar\l)^2}}.
\ee
So we get convergence in probability and by Borel--Cantelli lemma also convergence a.s. follows.
\end{proof}


\vspace{0.3cm}

\subsection{Models on Spherical Shells}

\ni It results useful to introduce the spherical and Gaussian models on a spherical shell. We consider the spherical shell of radius $R>0$ and thickness $\e>0$
\be\label{eq:shell}
S^{\e}_{R}:=\left\{z_1,...,z_N\in R^N: R-\frac{\e}{2} <\|z\|\leq R+\frac{\e}{2}\right\},
\ee
such that $\bigcup_{R}S^{\e}_R=\R^N$ $\forall \e>0$. The spherical shell partition function is defined as
\be\label{eq:Zsfershell}
Z^{sh}_{N,\e}(\b,J,R_N):=\frac{1}{\e}\int_{S^{\e}_{R_N}}\frac{dz_1...dz_N}{|S_{R_N}|}e^{-\beta H_N(z,J)},
\ee
where $H_N(z,J)$ is given by (\ref{eq:H}) and $R_N$ is a given sequence of radii. It turns out to be a fuzzy version of the spherical model, since, for $R_N=R\sqrt{N}$, 
\be\label{eq:lim-e-Zsh-Zsf}
\lim_{\e\to0} Z^{sh}_{N,\e}(\b,J,R\sqrt{N})=Z^{sf}_{N}(\b,J,R).
\ee
We can alternatively define
$$
Z^{sh}_{N,\e}(\beta,J,R):=\int_{\R^N}dz_1...dz_N\frac{\d^{\e}(\|z\|=R\sqrt{N})}{|S_{R\sqrt{N}}|}e^{-\beta H_N(z,J)},
$$
denoting as $\d^{\e}$ the generic mollified projector on the sphere. The two approaches are equivalent and they become identical if we take 
$$
\d^{\e}=\chi([R-\e/2,R+\e/2])/\e,
$$ 
where $\chi(\cdot)$ is the characteristic function of an interval in the radial coordinates. We will use this notation below.

\ni The pressure of the model is given by
\be\label{eq:Asfshell}
A^{sh}_{N,\e}(\b,R_N)=\frac{1}{N}\E\log Z^{sh}_{N,\e},
\ee
and we have straightforwardly 
\be\label{eq:Asfshell}
\lim_{\e\to0}A^{sh}_{N,\e}(\b,R\sqrt{N})=A^{sf}_N(\b,R)\,\Rightarrow\, \lim_N\lim_{\e\to0}A^{sh}_{N,\e}(\b,R\sqrt{N})=A^{sf}(\b,R),
\ee
given by (\ref{eq:A-sferico}). The next Lemma allows us to exchange to limits.

\begin{lemma}\label{lem:chlim} 
The $N\to \infty$ and $\e\to 0$ limits for the pressure of the spherical shell model can be taken in any order, and we have
\be\label{limitchange}
\lim_{N\to \infty}\lim_{\e\to0}A^{sh}_{N,\e}(\b,R\sqrt{N})=\lim_{\e\to0}\lim_{N\to \infty}A^{sh}_{N,\e}(\b,R\sqrt{N})=A^{sf}(\b,R)
\ee 
\end{lemma}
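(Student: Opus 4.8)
The plan is to prove that the double limit defining the spherical shell pressure is independent of the order by exhibiting a form of equicontinuity in $\e$ of the finite-$N$ pressure, which then licenses the Moore--Osgood type exchange of limits. We already know from \eqref{eq:Asfshell} that $\lim_{\e\to0}A^{sh}_{N,\e}(\b,R\sqrt N)=A^{sf}_N(\b,R)$ and that $\lim_N A^{sf}_N(\b,R)=A^{sf}(\b,R)$ by the proof of \eqref{eq:A-sferico}; so the right-hand iterated limit in \eqref{limitchange} is already under control. The content is to show the left-hand iterated limit exists and agrees.

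First I would establish two-sided bounds on $A^{sh}_{N,\e}$ that pinch it against $\tilde A(\bar q)$ uniformly in small $\e$. The upper bound is immediate: since $S^\e_{R\sqrt N}\subset\R^N$, the same Gaussian-comparison computation used for $Z^{sf}_N$ in the proof of \eqref{eq:A-sferico} gives $\e Z^{sh}_{N,\e}(\b,R\sqrt N)\le e^{qR^2N}\frac{(2\pi)^{N/2}}{|S_{R\sqrt N}|}e^{-\frac12\sum_i\log 2(q-\b\l_i)}$ for every $q>\b\bar\l$, whence $\limsup_N A^{sh}_{\tilde\e,N}\le\min_{q\ge\b\bar\l}\tilde A(q)$ provided $\e=\tilde\e N$ with $\tilde\e$ fixed (the $\frac1N\log(1/\e)$ correction then only shifts by $-\frac1N\log(\tilde\e N)\to0$). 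For the matching lower bound I would invoke \eqref{eq:max( , , )} together with the choice $q=\bar q$ already made in the proof of \eqref{eq:A-sferico}, which gives $\liminf_N A^{sh}_{\tilde\e,N}(\b)\ge\min_{q\ge\b\bar\l}\tilde A(q)$ uniformly in $\tilde\e$. Hence $\lim_N A^{sh}_{\tilde\e,N}(\b,R\sqrt N)=A^{sf}(\b,R)$ for every fixed $\tilde\e>0$, and since the value does not depend on $\tilde\e$, the $\tilde\e\to0$ limit is trivially the same. Passing from the scaling $\e=\tilde\e N$ back to $\e$ fixed and then sent to $0$ requires a monotonicity/continuity remark: $A^{sh}_{N,\e}$ is, up to the harmless $\frac1N\log\e$ term, monotone in the thickness of the shell when the integrand is a fixed positive density, so the behavior for $\e$ small and $N$ large is squeezed between the two regimes already handled.

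The last ingredient is the a.s. (not merely in-expectation) statement, which I would obtain exactly as in the proof of \eqref{eq:A-sferico}: $\frac1N\log Z^{sh}_{N,\e}$ is a convex, $L(\bar\l)$-Lipschitz function of the eigenvalues $\l_i$ (the gradient is again $\frac{\b}{N}\w(z_i^2)$, bounded because on the shell $\|z\|^2\le(R\sqrt N+\e/2)^2$), so Talagrand's concentration inequality of the form \eqref{eq:A-sf-a.s.Tal} applies with a constant uniform in $\e\le 1$, giving convergence in probability and then, via Borel--Cantelli, a.s. convergence; restricting to the full-probability event $\B_{\bar\l}$ as in Lemma \ref{lemma:dis-conc} removes the bad part of the spectrum.

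The main obstacle is the uniformity in $\e$ of the estimates near $\e\to0$: the prefactor $\frac1N\log\e$ blows up if $\e$ is allowed to depend on $N$ in the wrong way, and the Chernoff tail estimate \eqref{eq:max( , , )} was derived under $\e=\tilde\e N$; so the delicate point is to interpolate between "$\e$ fixed, $N\to\infty$" and "$\e$ macroscopic in $N$", showing the limit is the same constant $A^{sf}(\b,R)$ in both regimes and in between. Once the pinching bounds are seen to hold uniformly on, say, $\e\in(0,1]$, the Moore--Osgood exchange-of-limits theorem closes the argument.
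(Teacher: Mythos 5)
Your plan diverges from the paper's proof and, as written, has a genuine gap on the lower-bound side of the pinching in the thin-shell regime. The Gaussian-comparison upper bound does work for any measurable subset of $\R^N$, so $\limsup_N A^{sh}_{N,\e}(\b,R\sqrt N)\le\min_{q\ge\b\bar\l}\tilde A(q)$ holds for fixed $\e$ as well as for $\e=\tilde\e N$. But the matching lower bound you import from (\ref{eq:max( , , )}) was derived only for macroscopic shells $\e=\tilde\e N$: the Chernoff estimate controls the mass of the complement of such a shell, not of a shell of fixed thickness. Your proposed bridge between the two regimes --- monotonicity of $\e\mapsto\e Z^{sh}_{N,\e}$ --- points the wrong way: since thinner shells are contained in thicker ones and the integrand is positive, it yields $\e\, Z^{sh}_{N,\e}\le \tilde\e N\, Z^{sh}_{N,\tilde\e N}$, i.e.\ only \emph{another upper bound} on $A^{sh}_{N,\e}$; no lower bound on the thin shell can be extracted by comparison with thicker shells. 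What is actually needed is that the shell of fixed width $\e$ already carries an $e^{-o(N)}$ fraction of the mass of the macroscopic one; this is the nontrivial content (and is delicate at low temperature, where $\bar q=\b\bar\l$ sits at the boundary and the comparison Gaussian has a diverging variance in the top eigendirection), and it is asserted rather than proved. Consequently the uniform-convergence hypothesis for the Moore--Osgood exchange is not established. The concluding concentration argument is fine but tangential, since the lemma concerns the quenched average $\frac1N\E\log Z^{sh}_{N,\e}$.

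The paper's proof avoids all of this with a short quantitative identity. By the mean-value theorem applied to the radial integral, $A^{sh}_{N,\e}(\b,R\sqrt N)=A^{sf}_N(\b,R_{N,\e})$ for some $R_{N,\e}\in[R\sqrt N-\e/2,\,R\sqrt N+\e/2]$; the dilation $z\mapsto (R_{N,\e}/R\sqrt N)\,z$ maps the sphere of radius $R_{N,\e}$ to that of radius $R\sqrt N$ and, the Hamiltonian being quadratic, converts the radius shift into a temperature shift $\b\mapsto\b(1+C^\e_N)$ with $C^\e_N=O(\e/\sqrt N)$, whose effect on the pressure is bounded by $\b\bar\l R^2 C^\e_N$ on the good event of Lemma \ref{lemma:dis-conc}. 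Hence $|A^{sh}_{N,\e}(\b,R\sqrt N)-A^{sf}_N(\b,R)|=O(\e/\sqrt N)$, and both iterated limits follow at once. If you wish to keep your route, you must prove the thin-shell lower bound directly (e.g.\ a local limit theorem for $\|z\|^2$ under the product Gaussian at $q=\bar q$, with a separate treatment of the condensed phase $\b>\b_c$); otherwise the simplest repair is the paper's mean-value-plus-rescaling argument.
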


\begin{proof}

\ni By the mean-value theorem of integration, we can write
\be
A^{sh}_{N,\e}(\b,R_N)=A^{sf}_N(\b,R_{N,\e}),
\ee
for some $R_{N,\e}\in [R_N-\e/2,R_N+\e/2]$. In this way, setting $R_N=R\sqrt{N}$, we can estimate the difference
\bea
\left|A^{sh}_{N,\e}(\b,R_N)-A^{sf}_N(\b,R)\right|&=&\left|A^{sf}_{N}(\b,R_{N,\e})-A^{sf}_N(\b,R)\right|\nn\\
&=&\left|\frac 1 N \E \log \left(\frac{Z^{sf}_N(\b,J,R_{N,\e})}{Z^{sf}_N(\b,J,R)}\right)\right|.
\eea
Now, by using the properties of the spherical integral, we can turn the problem of integration on a different radius into a more treatable shift in temperature. We have
\be 
Z^{sf}_N(\b,J,R_{N,\e})=\int_{\|z\|=R_{N,\e}}\frac{dz}{|S_{R_{N,\e}}|}e^{-\b H_N(z,J)}=
\int_{\|z\|=R_{N}}\frac{dz}{|S_{R_{N}}|}e^{-\b\frac{R^2_{N,\e}}{R^2_N} H_N(z,J)},\nn
\ee
so that
\be 
\frac 1 N \E \log \left(\frac{Z^{sf}_N(\b,J,R_{N,\e})}{Z^{sf}_N(\b,J,R)}\right)=
\frac 1 N \E \log \omega\left( e^{-\b C^{\e}_N H}\right),
\ee
where we have isolated the $\e$-dependence in the term $C^{\e}_N:=\frac{R^2_{N,\e}}{R^2_N}-1=O(\e/\sqrt{N})$. Finally, because of Lemma \ref{lemma:dis-conc} we have
\be
\frac 1 N \E \log \omega \left(e^{-\b C^{\e}_N H}\right)=\frac 1 N \E \log \omega\left( e^{\b C^{\e}_N \sum_{i=1}^N\l_i z_i^2}\right)\leq\b R^2 \bar{\l} C^\e_N.
\ee
Therefore
\bea 
\left|A^{sh}_{N,\e}(\b,R_N)-A^{sf}_N(\b,R)\right|&=&\left|A^{sf}_{N}(\b,R_{N,\e})-A^{sf}_N(\b,R)\right|\nn\\
&=& O(\e/\sqrt{N}),
\eea
whence the Lemma follows.
\end{proof}

\begin{remark}
We notice that the second equality of ($\ref{limitchange}$) holds even before taking the $\e\to 0$ limit. 
Consequentially, we have also shown that, as $N\to\infty$, the spherical shell model (of radius $R\sqrt{N}$ and thickness $\e$) and any other spherical model defined inside the shell have the same free energy of a spherical model with radius $R\sqrt{N}$.
\end{remark}

\ni In analogy we can restrict the Gaussian model on the shell with  a proper cut off:
\bea
Z^{gsh}_{N,\e}(\b,J,\l,R_N)&:=&\frac{1}{\e}\int_{S^{\e}_{R_N}}\frac{dz_1...dz_N}{(2\pi)^{N/2}}e^{\left(-\beta H_N(z,J)-\frac{\beta^2}{4N}\|z\|^4+\frac{(\lambda-1)}{2}\|z\|^2\right)},\nn\\
A^{gsh}_{N,\e}(\b,\l,R_N)&:=&\frac{1}{N}\E\log Z^{gsh}_{N,\e}.\label{eq:A-gSh}
\eea
As in the previous case we have that
\be
Z^{gsh}_{N,\e}(\b,J,\l,R_N)=\int_{\R^N}\frac{dz_1...dz_N}{(2\pi)^{N/2}}\d^{\e}(\|z\|=R_N) e^{\left(-\beta H_N(z,J)-\frac{\beta^2}{4N}\|z\|^4+\frac{(\lambda-1)}{2}\|z\|^2\right)}\label{eq:Zgsh}.
\ee
Let us consider now, for a certain $R>0$, a spherical and a Gaussian model on the same spherical shell centered on $R_N=R\sqrt{N}$. Since we can fix the term $\|z\|^2=R^2N$ in the Gibbs measure of the Gaussian model, with a small error $O(\e)$,
\be
A_{N,\e}^{gsh}(\b,\l,R\sqrt{N})-A_{N,\e}^{sh}(\b,R\sqrt{N})=-\frac{\b^2}{4}R^4+\frac{\l-1}{2}R^2+\frac{1}{N}\log\left(\frac{S_N}{\sqrt{2\pi}^N}\right) +O(\e /N) \nn
\ee
and, in virtue of Lemma \ref{lem:chlim}, 
\be\label{eq:rel-Asf-Ag}
A^{gsh}_{N,\e}(\b,\l,R\sqrt{N})=A_N^{sf}(\b,R)-\frac{\b^2}{4}R^4+\frac{\l-1}{2}R^2+\log R+\frac{1}{2} + o_N(1),
\ee
where $o_N(1)$ stands for a vanishing term, uniformly in $\e$, in the thermodynamic limit, being $\lim_N \frac{1}{N}\log\left(\frac{S_N}{\sqrt{2\pi}^N}\right)=\log R+\frac{1}{2}$.

\vspace{0.3cm}

\subsection{Equivalence of Spherical and Gaussian Ensemble}


\ni In the last step of our analysis we exploit the equivalence of ensembles of statistical mechanics (a classical reference is \cite{Ru69}), here seen as spherical and Gaussian ensembles. Hereafter for simplicity we will set once for all $\bar \l=1$. 

\ni Let us introduce the annealed pressure for the Gaussian model
\be
a^g(\b,\l):=\lim_{N\to\infty}\frac 1 N \log \E Z^{g}_N.
\ee
Then we have the following
\begin{proposition}
We have the following bounds:
\be\label{eq:AnnealingG}
\limsup_N A^g_N(\b, \l)\leq a_N^g(\b,\l)\leq \max\left(\frac12\left(2+\frac{\l}{\b}\right)^2, -\frac12\log(1-\l) \right).
\ee
\end{proposition}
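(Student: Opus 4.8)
The plan is to mimic exactly the structure of the proof of Proposition~\ref{lem-annealed-sf} for the spherical model, but now working with the Gaussian density and the extra quartic and quadratic terms in the Hamiltonian~(\ref{eq:Zg}). After diagonalising the interaction, the annealed partition function reads
\be
\E Z^g_N=\E\left[\int_{\R^N}\frac{dz}{(2\pi)^{N/2}}\exp\left(\sum_i\b\l_i z_i^2-\frac{\b^2}{4N}\|z\|^4+\frac{\l-1}{2}\|z\|^2\right)\right]. \nn
\ee
First I would split the expectation over the disorder using the events $\B_a$ and $\B^c_a$ for a fixed $a>\bar\l=1$, as in the earlier proof. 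On $\B_a$ one bounds $\sum_i\b\l_i z_i^2\leq \b a\|z\|^2$, so the $z$-integral is Gaussian-like in $\|z\|^2$; completing the square (or rather optimising the exponent $(\b a+\frac{\l-1}{2})\|z\|^2-\frac{\b^2}{4N}\|z\|^4$ over $\|z\|^2$) yields a term whose exponential growth rate is $\frac{1}{\b^2}(\b a+\frac{\l-1}{2})^2=\frac12\bigl(2a+\frac{\l}{\b}\bigr)^2$ when $a=1$ (using $\b a+\tfrac{\l-1}{2}>0$), and otherwise the contribution is subexponential. On $\B^c_a$ one decomposes into the shells $B_k=\{ak\leq\max_i|\l_i|\leq a(k+1)\}$, bounds $\sum_i\b\l_i z_i^2\leq \b a(k+1)\|z\|^2$, performs the same $\|z\|^2$-optimisation inside each $B_k$, and then uses the large-deviation estimate~(\ref{eq:P(l>a)}), i.e. $P(B_k)\leq C_1 e^{-\theta a^2 k^2 N}$. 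The growth rate coming from $B_k$ is then roughly $\frac{\b^2 a^2 k^2}{\b^2}-\theta a^2 k^2=a^2 k^2(1-\theta)$ times $N$ — wait, that is not obviously the right shape.

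More carefully: on $B_k$ the $z$-integral grows like $\exp\bigl[\frac{N}{\b^2}(\b a(k+1)+\frac{\l-1}{2})^2\bigr]$, which for large $k$ behaves like $\exp[N a^2(k+1)^2]$ (absorbing $\b$), while $P(B_k)$ carries $e^{-\theta a^2 k^2 N}$. So the $k$-th term is $\lesssim \exp[N a^2((k+1)^2-\theta k^2)]$, and the sum over $k$ converges and is dominated by a constant-order (in $k$) term precisely when $\theta$ can be taken large enough — but Hypothesis~\textbf{H} only gives \emph{some} $\theta>0$. Here I would use the same trick as in the spherical case: one is free to take $a$ as close to $\bar\l=1$ as desired and then take the infimum over $a>1$ at the end, and also one may enlarge $\theta$ by shrinking it is not allowed, but the quartic term $-\frac{\b^2}{4N}\|z\|^4$ actually \emph{helps}, because it tames the $\|z\|^2$-integral: the optimal $\|z\|^2\sim \frac{2N}{\b^2}(\b a(k+1)+\frac{\l-1}{2})$ grows linearly in $k$, and the quartic penalty makes the rate quadratic in $k$ with coefficient $a^2$, not worse. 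The cleanest route is to observe that the $B_k$ sum contributes at most $C_1 a\sqrt{2\pi\theta N}\exp[N\cdot r]$ with $r=\sup_{k\geq 1}\bigl[a^2(k+1)^2(1+o(1))-\theta a^2 k^2\bigr]$, which is finite and, crucially, the whole $\B^c_a$ block contributes a rate no larger than $-\frac12\log(1-\l)$ is not what appears — rather the second term in the $\max$ in~(\ref{eq:AnnealingG}), $-\frac12\log(1-\l)$, is exactly the value of the \emph{pure Gaussian} integral (no disorder, no quartic term beyond leading order) obtained from $\int (2\pi)^{-N/2}e^{\frac{\l-1}{2}\|z\|^2}dz=(1-\l)^{-N/2}$; so I expect the $\max$ to arise as: the $\B_a$ piece gives $\frac12(2a+\l/\b)^2\to\frac12(2+\l/\b)^2$ as $a\downarrow 1$, the $\B^c_a$ piece is controlled and contributes nothing beyond this, and $-\frac12\log(1-\l)$ is the baseline coming from the region where $\b$ is small (so that $\b a+\frac{\l-1}{2}$ would be $\leq 0$ and the quadratic-in-$\|z\|^2$ optimisation is replaced by the plain Gaussian normalisation).

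So the key steps, in order, are: (i) diagonalise and write $\E Z^g_N$ as above; (ii) for fixed $a>1$, split over $\B_a$ and $\bigcup_k B_k$; (iii) on each piece bound $\sum\b\l_i z_i^2$ by a multiple of $\|z\|^2$ and carry out the one-dimensional optimisation of $c\|z\|^2-\frac{\b^2}{4N}\|z\|^4$, distinguishing $c>0$ (giving rate $c^2/\b^2$) from $c\leq 0$ (giving rate $0$, i.e. the Gaussian normalisation $-\frac12\log(1-\l)$ survives); (iv) insert~(\ref{eq:P(l>a)}) for the $B_k$ terms and sum the resulting geometric-type series, picking up only a subexponential prefactor plus a contribution bounded by the $\B_a$ rate; (v) take $\frac1N\log$, let $N\to\infty$, then take the infimum over $a>1$ to collapse $a$ to $1$, producing the two entries of the $\max$; (vi) conclude $\limsup_N A^g_N\leq a^g_N$ by Jensen's inequality, exactly as stated. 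I expect the main obstacle to be step~(iv): making rigorous that the sum over the shells $B_k$ does not contribute a strictly larger exponential rate than the $\B_a$ term, since a naive bound gives a rate growing like $a^2 k^2(1-\theta)$ which is only negative for $\theta>1$. The resolution is that the quartic term forces the dominant $\|z\|^2$ to scale only \emph{linearly} in $k$ (not freely), so the gain from $-\frac{\b^2}{4N}\|z\|^4$ exactly matches the $k^2$ growth of the disorder term, leaving a net rate controlled by $-\theta a^2 k^2$ for the probability factor; summing then gives convergence for \emph{every} $\theta>0$ after choosing $a$ slightly larger than $1$, with the prefactor $C_1 a\sqrt{2\pi\theta N}$ being harmless in the limit, precisely as in the proof of Proposition~\ref{lem-annealed-sf}.
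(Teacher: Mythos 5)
Your overall strategy coincides with the paper's: Jensen's inequality, then the split of $\E Z^g_N$ over $\B_a$ and $\B^c_a$, with the $\B_a$ piece handled by maximising the non-Gaussian part of the exponent over $\|z\|^2$ while the factor $(2\pi)^{-N/2}e^{-\|z\|^2/2}$ is kept aside as a probability density of total mass one. (Note that you must keep it aside: if you absorb $-\|z\|^2/2$ into the optimisation and extract the supremum of the whole exponent, the leftover integral $\int d^Nz/(2\pi)^{N/2}$ is infinite.) That part is essentially fine and produces the first entry of the max; your constant bookkeeping is off (the supremum of $\b\|z\|^2-\frac{\b^2}{4N}\|z\|^4+\frac{\l}{2}\|z\|^2$ is $\frac N4\left(2+\frac{\l}{\b}\right)^2$, not $\frac12\left(2a+\frac{\l}{\b}\right)^2$ with your $c$), but since only an upper bound is needed this is harmless.

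The genuine gap is your step (iv), and the ``resolution'' you propose for it is arithmetically false. On $B_k$ one has $\b\sum_i\l_iz_i^2-\frac{\b^2}{4N}\|z\|^4\le\sup_{t\ge0}\bigl(\b a(k+1)t-\frac{\b^2}{4N}t^2\bigr)=Na^2(k+1)^2$: at the optimal $t^*=2Na(k+1)/\b$ the disorder term equals $2Na^2(k+1)^2$ and the quartic penalty equals $-Na^2(k+1)^2$, so the net rate is $+a^2(k+1)^2$, not $0$. The $k$-th term of your shell sum is therefore of order $\exp\left[Na^2\left((k+1)^2-\theta k^2\right)\right]$, which blows up as $k\to\infty$ whenever $\theta<1$; Hypothesis \textbf{H} only guarantees \emph{some} $\theta>0$, so the sum over shells cannot be closed by this route. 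The paper does not in fact perform this sum: its displayed bound for the $\B^c_a$ piece is $\int_{\R^N}(2\pi)^{-N/2}e^{-\frac{1-\l}{2}\|z\|^2}d^Nz=(1-\l)^{-N/2}$, i.e.\ the disorder-dependent factor $e^{-\b H_N-\frac{\b^2}{4N}\|z\|^4}$ is removed altogether, which is justified by the calibration of the quartic counterterm so that $\E\left[e^{-\b H_N(z,J)-\frac{\b^2}{4N}\|z\|^4}\right]\le1$ pointwise in $z$ (the normalisation $\E Z^g_N(\b,0)=1$ built into (\ref{eq:Zg}) for Gaussian disorder), rather than by the $B_k$ decomposition. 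This is also where the second entry $-\frac12\log(1-\l)$ of the max actually originates: it is the bound on the bad-spectrum event, not, as you conjecture, a baseline coming from the regime $\b a+\frac{\l-1}{2}\le0$. To repair your proof you should replace step (iv) by this normalisation argument (or impose $\theta>1$, which Hypothesis \textbf{H} does not give you).
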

\begin{remark}
The annealed bound is finite for $\l<1$, as shown also in \cite{BGGT}.
\end{remark}
\begin{proof}
We proceed exactly as in the proof of Proposition \ref{lem-annealed-sf}: at first we use Jensen inequality to exchange the logarithm and the expectation with respect to the quenched disorder; then we compute
$$
\E[Z^g_N(\b,\l)]=\E[Z^g_N(\b,\l)\mathcal{I}_{\B_a}]+\E[Z^g_N(\b,\l)\mathcal{I}_{\B^c_a}].
$$
For the first addendum we profit from the boundedness condition of the spectrum. We have that the maximum of the function $$\exp{\left(\beta \|z\|^2 -\frac{\beta^2}{4N}\|z\|^4+\frac{\lambda}{2}\|z\|^2\right)}$$ is attained for $\|z\|^2=N\frac{2\beta+\l}{\b^2}$ and it is equal to $\exp\left[\frac N2 \left(2+\frac{\l}{\b}\right)^2\right]$. Thus we have

\be\label{eq:AAnnGauss-B1}
\E[Z^g_N(\b,\l)\mathcal{I}_{\B_a}]\leq \exp\left[\frac N2 \left(2+\frac{\l}{\b}\right)^2\right].
\ee

\ni For the second addendum we can repeat the argument of Proposition \ref{lem-annealed-sf} to obtain
\be\label{eq:AAnnGauss-B2}
\E[Z^g_N(\b,\l)\mathcal{I}_{\B^c_a}]\leq \int_{\R^N}\frac{d^Nz}{(2\pi)^{\frac N2}}e^{-\frac{1-\l}{2}\|z\|^2}=e^{-\frac{N}{2}\log(1-\l)}.
\ee
Recollecting the contribution given by (\ref{eq:AAnnGauss-B1}) and (\ref{eq:AAnnGauss-B2}), we have (\ref{eq:AnnealingG}) in the limit $N\to\infty$.
\end{proof}

\ni Then we want to show that the Gibbs measure of the Gaussian Model is concentrated on a ball of radius growing as $\sqrt{N}$. To this purpose, fixed two arbitrary numbers $\d>0$ and $R>0$, we set
\bea
T_N(\d,R)&:=&\left\{ z_1,...,z_N\in\R^N: \|z\|_N^2\geq R^2N^{1+\d} \right\},\label{eq:T_N}\\
Z^g_{\|z\|^2\geq R^2 N^{1+\d}}&:=&\int_{T_N(\d,R)}dz_1...dz_N\frac{e^{-\|z\|^2/2}}{(2\pi)^{N/2}}e^{\left(-\beta H_N(z,J)-\frac{\beta^2}{4N}\|z\|^4+\frac{\lambda}{2}\|z\|^2\right)}.
\eea

\ni So we are ready to establish the subsequent

\begin{lemma}\label{pr:tail}
Let us fix arbitrarily $R>0$. For every $\d>0$
\be\label{eq:concentrato}
Z^g_{\|z\|^2\geq R^2 N^{1+\d}}= \OOO{e^{-N^{1+\d}}}.
\ee
\end{lemma}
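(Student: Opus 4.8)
The plan is to exploit the quartic term $-\frac{\b^2}{4N}\|z\|^4$ in the exponent of~(\ref{eq:Zg}): on the far tail $T_N(\d,R)=\{\|z\|^2\geq R^2N^{1+\d}\}$ it overwhelms every other contribution and forces the integral to be exponentially small of the required order. Exactly as in the proof of Lemma~\ref{lemma:dis-conc} and of the annealed bounds, it is enough to work on disorder realisations with $\spect[J]\subseteq[-\bar\l,\bar\l]$: by a union bound over the (at most $N$) eigenvalues and Borel--Cantelli, (\ref{eq:P(l>a)}) shows this holds for almost every $J$ once $N$ is large; alternatively one splits $Z^g_{\|z\|^2\geq R^2N^{1+\d}}$ over $\B_a$ and $\B^c_a$ as in Proposition~\ref{lem-annealed-sf}, the $\B^c_a$ piece being negligible by~(\ref{eq:P(l>a)}). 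On such realisations, after diagonalising the interaction, recalling $\bar\l=1$ and combining $e^{-\|z\|^2/2}$ with the factor $e^{\frac{\l}{2}\|z\|^2}$, the exponent of the integrand defining $Z^g_{\|z\|^2\geq R^2N^{1+\d}}$ is bounded above by
$$
\b\sum_{i=1}^N\l_i z_i^2-\frac{\b^2}{4N}\|z\|^4+\frac{\l-1}{2}\|z\|^2\ \leq\ \Big(\b+\tfrac{\l-1}{2}\Big)\|z\|^2-\frac{\b^2}{4N}\|z\|^4 .
$$

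Next I would carry out the pointwise estimate on $T_N(\d,R)$. With $t=\|z\|^2\geq R^2N^{1+\d}$ and $c=\b+\frac{\l-1}{2}$ one has $\frac{\b^2}{4N}t^2\geq\frac{\b^2R^2}{4}N^\d\,t$, so the exponent is at most $\big(c-\frac{\b^2R^2}{4}N^\d\big)t\leq-\frac{\b^2R^2}{8}N^\d\,t$ for every $N$ beyond some $N_0(\b,\l,R)$ (the sign of $c$ is immaterial). Passing to polar coordinates, it remains to bound
$$
\int_{RN^{(1+\d)/2}}^{\infty}r^{N-1}\,e^{-\frac{\b^2R^2}{8}N^\d r^2}\,\D r ,
$$
times a factor at most $e^{O(N\log N)}$ coming from the Jacobian and the normalisation $(2\pi)^{-N/2}|S_1|$ (which is itself super-exponentially small). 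Using $r^{N-1}e^{-\frac{\b^2R^2}{8}N^\d r^2}\leq\big(\sup_{r>0}r^{N-1}e^{-\frac{\b^2R^2}{16}N^\d r^2}\big)\,e^{-\frac{\b^2R^2}{16}N^\d r^2}$, the supremum is $e^{O(N\log N)}$ and $\int_{RN^{(1+\d)/2}}^{\infty}e^{-\frac{\b^2R^2}{16}N^\d r^2}\,\D r\leq e^{-\frac{\b^2R^4}{16}N^{1+2\d}}$ for $N$ large; multiplying, one obtains a bound of the form $e^{O(N\log N)-\frac{\b^2R^4}{16}N^{1+2\d}}$, which is $\OOO{e^{-N^{1+\d}}}$, in fact with ample room since $N^{1+2\d}\gg N^{1+\d}$.

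The only point needing genuine care is the disorder term $\b\sum_i\l_iz_i^2$: once $\spect[J]\subseteq[-\bar\l,\bar\l]$ it is dominated by $\b\|z\|^2$, which is in turn dwarfed by the quartic term already for $\|z\|^2\gtrsim N$, hence a fortiori on $T_N(\d,R)$; dealing with $\B^c_a$ via~(\ref{eq:P(l>a)}), if one wants a statement uniform in the disorder or in expectation, is the same routine bookkeeping as in Proposition~\ref{lem-annealed-sf} and in the Gaussian annealed bound. The $r^{N-1}$ Jacobian is harmless, the quartic decay beating any polynomial. So I do not expect a substantial obstacle here: the proof should reduce to choosing $N$ large and tracking constants.
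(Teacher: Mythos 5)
Your proof is correct and reaches the conclusion by a genuinely different route from the paper's. The paper's argument is a Chernoff/tilting argument carried out at the level of pressures: it bounds the Gibbs probability of $T_N(\d,R)$ by $e^{-\mu R^2N^{1+\d}}\,\mathbb{E}_\pi[e^{\mu\|z\|^2}]$, identifies the moment generating function with a shift of the chemical potential $\l$, and then controls the tilted pressure by the annealed bound (\ref{eq:AnnealingG}); the quartic confinement enters only implicitly, through the finiteness of that bound, and no explicit integration is performed. You instead dominate the integrand pointwise: on $\|z\|^2\ge R^2N^{1+\d}$ the term $\frac{\b^2}{4N}\|z\|^4$ exceeds every linear-in-$\|z\|^2$ contribution by a factor of order $N^{\d}$, and the remaining radial integral is elementary. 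Your route is more self-contained (it does not use the Gaussian annealed bound at all) and in fact yields the stronger decay $e^{-cN^{1+2\d}}$, at the price of some polar-coordinate bookkeeping. Both arguments must first dispose of the event that the spectrum leaves a compact set; your deferral to the splitting of Proposition \ref{lem-annealed-sf} is acceptable and matches the paper's own level of care, with two minor caveats: the restriction should be to $\spect[J]\subseteq[-a,a]$ for a fixed $a>\bar\l$, since (\ref{eq:P(l>a)}) is only available for $a>\bar\l$ (the paper commits the same abuse in Lemma \ref{lemma:dis-conc}); and on $\B^c_a$ the positive part of the exponent, after completing the square against the quartic term, grows like $N\l_{\max}^2$ rather than linearly in $\l_{\max}$ as in Proposition \ref{lem-annealed-sf}, so the summation over the slices $B_k$ leans on the sub-Gaussian hypothesis \textbf{H} exactly as in the proof of (\ref{eq:AnnealingG}). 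Neither caveat affects the validity of your argument.
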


\begin{proof}
\ni To begin with, let us define for a certain event $\Omega\subseteq\R^N$
\be
\pi^{\b,\l}_N(\Omega):=\frac{1}{Z^g_N}\int_\Omega \g_N(z) \exp\left(\b\sum_i^N \l_iz_i^2+\l\|z\|^2-\frac{\b}{4N}\|z\|^4\right).
\ee  
Let $\mathbb{E}_\pi$ be the expectation w.r.t. the density $\pi^{\b,\l}_N$ and $R,\d>0$ fixed. Using the Chernoff bound we get for every $\mu>0$
\be
\pi^{\b,\l}_N(\|z\|^2\geq R^2N^{1+\d})\leq e^{-\frac{\mu}{2} R^2 N^{1+\d}}\mathbb{E}_\pi\left[e^{\mu\|z\|^2}\right].
\ee
In particular we choose $\max(0,\l-1)<\mu\leq\l$. In addition we have
\bea
\log\mathbb{E}_\pi\left[e^{\mu\|z\|^2}\right]&=&\log\int_{T_N(\d,R)}dz_1...dz_N\frac{e^{-\|z\|^2/2}}{(2\pi)^{N/2}}e^{\left(-\beta H_N(z,J)-\frac{\beta^2}{4N}\|z\|^4+\frac{\lambda}{2}\|z\|^2\right)}-N\log Z_N^g(\b,\l)\nn\\
&=&N[A_N(\b, \l-\mu)-A_N(\b,\l)]\nn.
\eea
Therefore
\bea
Z^g_{\|z\|^2\geq R^2 N^{1+\d}}&\leq&\exp\left[ -\mu R^2N^{1+\d}+NA_N(\b,\l-\mu) \right]\nn\\
&=&\exp\left[ -N^{1+\d}\left(\mu R^2-N^{-\d}A_N(\b,\l-\mu)\right) \right]\nn\\
&\leq&\exp\left[ -N^{1+\d}\max_{\mu\in(0,\l]}\left(\mu R^2-N^{-\d}A_N(\b,\l-\mu)\right) \right],\nn
\eea
and, because of the annealed bound (\ref{eq:AnnealingG}), we obtain (\ref{eq:concentrato}).
\end{proof}

\ni Finally we can proceed with the

\begin{proof}[Proof of (\ref{eq:A-gauss})]
We will show that the r.h.s. of (\ref{eq:A-gauss}) is an upper and a lower bound for $A^g(\b,\l)$.

\ni We start with the lower bound. Let us define $\bar{R}$ as the radius where it is reached the unique maximum of $(\ref{eq:A-gauss})$ and let us abbreviate hereafter $S_R^\e:=S_{R\sqrt{N}}^\e$.
We have that 
\begin{eqnarray}
Z^g_N(\b,\l;J) &=&\int_{\R^N}dz_1...dz_N\frac{e^{-\|z\|^2/2}}{(2\pi)^{N/2}}e^{\left(-\beta H_N(z,J)-\frac{\beta^2}{4N}\|z\|^4+\frac{\lambda}{2}\|z\|^2\right)}\nonumber\\
&\geq & \int_{S_{\bar{R}}^{\e}}dz_1...dz_N\frac{e^{-\|z\|^2/2}}{(2\pi)^{N/2}}e^{\left(-\beta H_N(z,J)-\frac{\beta^2}{4N}\|z\|^4+\frac{\lambda}{2}\|z\|^2\right)}\nonumber\\
&\gtrsim& e^{N\left(\frac{(\l-1)}{2}\bar{R}^2-\frac{\beta^2}{4}\bar{R}^4+\frac 1 N \log |S_{\bar{R}\sqrt{N}}|-\frac 1 2 \log (2\pi)\right)}\left(\int_{S_{\bar{R}}^{\e}}\frac{dz_1...dz_N}{\e |S_{\bar{R}\sqrt{N}}|}e^{\left(-\beta H_N(z,J)\right)}\right)\nn
\end{eqnarray}
and then $\forall \e$,
\be
A^g_N(\b,\l)\geq \frac{(\l-1)}{2}\bar{R}^2-\frac{\beta^2}{4}\bar{R}^4+\frac 1 N \log |S_{\bar{R}\sqrt{N}}|-\frac 1 2 \log (2\pi) +A_{N,\e}^{sh}(\beta,\bar{R}\sqrt{N}).
\ee
We take the $\liminf$ over $N$ on the left, and just the limit for $N\to\infty$ on the right. By using Lemma $\ref{lem:chlim}$ and the computation $\frac 1 N \log |S_{\bar{R}\sqrt{N}}|-\frac 1 2 \log (2\pi) \to \log\bar{R} +\frac 1 2 $, we eventually obtain the r.h.s. of (\ref{eq:A-gauss}) as a lower bound:
\be\label{eq:liminfAg}
\liminf_N A^g_N(\b,\l)\geq \max_{R\in(0,\infty)}\left(A^{sf}(\b,R)-\frac{\b^2R^4}{4}+\frac{(\l-1)R^2}{2}+\log R+\frac{1}{2} \right).
\ee 

\ni In order to get the reverse bound, for an arbitrary $\delta>0$ we decompose $\R^N=T_N(\delta)\cup T^c_N(\delta)$. In virtue of Proposition $\ref{pr:tail}$ the integration over $T_N(\delta)$ does not give any thermodynamical contribution to the free energy. Consequently, for simplicity, we can consider as configuration space just $T^c_N(\delta)$. Then we look at a generic partition of $ T_N^c(\delta)$ into $N^{\delta}/2\e$ shells of thickness $2\e$ and we estimate
\begin{eqnarray}
Z^g_N &\leq& \frac{N^{\delta}}{2\e}\max_{R\in[0,N^{\delta}]}\left(\int_{S_{R}^{\e}}dz_1...dz_N\frac{e^{-\|z\|^2/2}}{(2\pi)^{N/2}}e^{\left(-\beta H_N(z,J)-\frac{\beta^2}{4N}\|z\|^4+\frac{\lambda}{2}\|z\|^2\right)}\right)\nonumber \\
&=&\frac{N^{\delta}}{2\e}\max_{R\in[0,N^{\delta}]}\left(e^{N\left(\frac{(\l-1)}{2}R^2-\frac{\beta^2}{4}R^4+\frac 1 N \log S^N_{R}-\frac 1 2 \log (2\pi)\right)+ o_N(1)}\int_{S_{R}^{\e}}\frac{dz_1...dz_N}{\e |S_{R\sqrt{N}}|}e^{\left(-\beta H_N(z,J)\right)}\right).\nn
\end{eqnarray}
Thus
\begin{eqnarray}
A^g_N&\leq& \frac{\delta}{N}\log\left(\frac{N}{2\e}\right)\nonumber \\
&+&\max_{R\in[0,N^{\delta}]}\left(\frac{(\l-1)}{2}R^2-\frac{\beta^2}{4}R^4+\frac 1 N \log |S_{R\sqrt{N}}| -\frac 1 2 \log (2\pi) +A_{N,\e}^{sh}(\beta,R\sqrt{N}) +o_N(1)\right)\nn.
\end{eqnarray}
Again we take $N\to\infty$: the $\limsup$ on the left and the limit on the right. Here we note that it is possible to exchange the limit with the $\max$, since the functional converges uniformly in $R$ in each bounded subset of $\R$ and tends to $-\infty$ as $R\to\infty$ for all $N$. Thus we get the reverse inequality:
\be\label{eq:limsupAg}
\limsup_N A^g_N(\b,\l)\leq \max_{R\in(0,\infty)}\left(A^{sf}(\b,R)-\frac{\b^2R^4}{4}+\frac{(\l-1)R^2}{2}+\log R+\frac{1}{2} \right).
\ee

\ni Now we notice that the pressure of the spherical model $A^{sf}(\b,R)$ depends in fact on $R^2$:
\bea
Z^{sf}_N(\b,R)&=&\int_{\|z\|^2=R^2N}\frac{dz_1...dz_N}{|S_{R\sqrt{N}}|} e^{-\beta \sum_i \l_i z_i^2}\nn\\
&=&\int_{\|\bar z\|^2=N}\frac{d\bar z_1...d\bar z_N}{|S_N|} e^{-\beta R^2 \sum_i \l_i \bar z_i^2}\nn\\
&=&Z^{sf}_N(\b R^2,1)\nn,
\eea
by the simple change of variables $R \bar z=z$. So we have
$$
A^{sf}(\b,R)=A^{sf}(\b R^2,1),
$$
and
\be\label{eq:main-formula-v2}
A^g(\b,\l)=\max_{R^2\in(0,\infty)}\left(A^{sf}(\b R^2,1)-\frac{\b^2R^4}{4}+\frac{(\l-1)R^2}{2}+\frac{\log R^2+1}{2} \right).
\ee
\end{proof}

\vspace{0.5cm}


\section{Conclusions and Outlooks}\label{sec:concl}

\ni In this paper we have analysed the relation between Gaussian and spherical spin glass models. In particular we have pointed out precisely their duality in terms of Legendre structure, or equivalence of Gaussian and spherical ensembles. Our work consequently permits to deal deliberately with one or the other model in further studies. 

\ni It is worthwhile to remark that the explicit representation ($\ref{gaussian free energy}$) coincides with the RS approximation exhibited in \cite{BGGT}. This enables us to complete the picture, by identifying $\bar q$ with the Edward-Anderson order parameter of the model. We have that replica symmetry holds in the whole phase diagram and the transition is between a high temperature phase and a RS one. It is ruled by the value of the overlap, that is fixed to zero in $\{(\b,\l):\: \b<1-\l,\, \l<1\}$ and to $\bar q$ otherwise. This holds true for random interactions in the Wigner ensemble with a sub-Gaussian tail, according to the hypothesis \textbf{H}.

\ni We stress that our approach, reminiscent of the earliest works in spin glasses, albeit apparently more general, does not give naturally this picture. Practically, we do not know a priori the significance of the minimiser $\bar q$. We need the scheme of \cite{BGGT} for giving a complete interpretation to our results in terms of the correct order parameter (\ie the overlap). Of course, Legendre duality permits us to transfer all these considerations to the spherical spin glass as well.

\ni Lastly, we examine the relation between the Gaussian and spherical models and the (analogical) Hopfield model of neural networks. The analogical Hopfield Model is defined as follows: consider $N$ Bernoulli spin r.vs $\s_i$ interacting via the Hamiltonian
\be
H_N=-\frac{1}{N}\sum_{\mu=1}^{K}\sum_{i,j}^N\xi_i^{\mu}\xi^{\mu}_j\s_i\s_j
\ee
where $\xi_i^{\mu}$ are $K$ $\mathcal{N}$(0,1) i.i.d. quenched random vectors in $\R^N$ (or patterns) with $\lim_N K/N=\alpha \in \R^+$. One is interested as usual to the pressure of the model defined as
$$
A_H(\b,\a)=\lim_{N,K} \frac1N\mathbb{E} \log \sum_\s e^{-\b H_N}.
$$

\ni In the original formulation in the celebrated paper by Hopfield \cite{hop}, the random patterns were Bernoulli $\pm1$ r.vs. The two versions are in fact supposed to be qualitatively different \cite{univ}, and we refer to \cite{bovbook} and \cite{talabook} for an exhaustive account on the topic.

\ni By a Gaussian transformation we can map the Hopfield model in a bipartite spin glass with Bernoulli and Gaussian spin:
$$
-\frac{1}{N}\sum_{\mu=1}^{K}\sum_{i,j}^N\xi_i^{\mu}\xi^{\mu}_j\s_i\s_j\longrightarrow-\frac{1}{\sqrt{N}}\sum_{\mu=1}^{K}\sum_{i}^N\xi_i^{\mu}\s_iz_\mu.
$$
\ni With this approach, it has been shown in \cite{NN?} that the pressure of the analogical Hopfield Model, at least in RS regime, can be written as a convex combination of the pressure of a SK model and the one of a Gaussian model, calculated at different suitable temperatures. More precisely, let us define for $\bar q_H>0$
\bea
\beta_1 &:=& \frac{\sqrt{\alpha} \beta}{1 - \beta(1- \bar{q}_H)}, \label{b1}\\
\beta_2 &:=& 1 - \beta(1-\bar{q}_{H})\label{b2}.
\eea
We have proven that, fixed $\b_1$ and $\b_2$ as in (\ref{b1}) and (\ref{b2}), the replica symmetric approximation of the quenched pressure of the analogical neural networks can be linearly decomposed as follows:
\be\label{eq:dec-SK-G}
A_{NN}^{RS}(\beta) = A_{SK}^{RS}(\beta_1) -\frac14 \beta_1^2 + \alpha A_{Gauss}(\beta_2,\b).
\ee

\ni Now we can rephrase this result in terms of the spherical model by means of the duality we have established. We note that the radius of the spherical model has a definite meaning in the context of neural networks, being the self-overlap of the Gaussian spins $p_{11}:=\frac 1 K \sum_{i=1}^K z_i^2$. This is known to be self averaging and related to the internal energy \cite{Salerno}: 
\be
\lim_{N\to\infty} \frac{\left\langle H_N\right\rangle}{N}=\frac{\alpha}{2\beta}\left(1-\left\langle p_{11}\right\rangle\right).
\ee

\ni We have

\begin{corollario}
Fixed $\b_1$ and $\b_2$ as in (\ref{b1}) and (\ref{b2}), the replica symmetric approximation of the quenched pressure of the analogical neural networks can be linearly decomposed as follows
\bea
A_{NN}^{RS}(\beta)&=&A_{SK}^{RS}(\beta_1) -\frac14 \beta_1^2 + \alpha A^{sf}(\beta_2,\sqrt{p_{11}})\nn\\
&+&\frac{\a}{2}(\b-1)\meanv{p_{11}}-\frac{\a\b_2^2}{4}\meanv{p^2_{11}}+\frac{\a}{2}(1+\log \meanv{p_{11}})\label{eq:decSK-Sf}.
\eea
\end{corollario}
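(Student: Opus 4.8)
The plan is to take the replica-symmetric decomposition (\ref{eq:dec-SK-G}) of \cite{NN?} as given and to substitute into it the Legendre duality (\ref{eq:A-gauss}) of the Theorem. By construction of the bipartite representation of the analogical Hopfield model, the term $\a A_{Gauss}(\b_2,\b)$ in (\ref{eq:dec-SK-G}) is $\a$ times the pressure of a Gaussian spin glass of the form (\ref{eq:Zg})--(\ref{eq:Agauss-def}) carried by the $K$ pattern variables $z_1,\dots,z_K$, at inverse temperature $\b_2$ and with chemical potential $\l=\b$, the prefactor $\a=\lim_N K/N$ turning the per-spin pressure of those $K$ variables into a per-site one; thus $A_{Gauss}(\b_2,\b)=A^g(\b_2,\b)$. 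I would then apply (\ref{eq:main-formula-v2}) (equivalently (\ref{eq:A-gauss})) to write
\[
A^g(\b_2,\b)=\max_{R^2\in(0,\infty)}\left(A^{sf}(\b_2,R^2)-\frac{\b_2^2R^4}{4}+\frac{(\b-1)R^2}{2}+\frac{\log R^2+1}{2}\right),
\]
the maximum being attained at a unique $\bar R^2$.

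The crucial step is to identify this maximiser with the self-overlap, $\bar R^2=\meanv{p_{11}}$. I would argue this in two equivalent ways. First, by the proof of the Theorem $\bar R^2$ is precisely the (squared, rescaled) radius of the spherical shell on which the Gaussian Gibbs measure concentrates in the slicing argument (cf.\ Lemma \ref{pr:tail} and the proof of (\ref{eq:A-gauss})), and for the pattern variables this radius is by definition $\frac1K\sum_{i=1}^K\w(z_i^2)=\meanv{p_{11}}$. Second, and more quantitatively, differentiating the displayed variational formula with respect to $\l=\b$ gives $\partial_\l A^g(\b_2,\l)=\bar R^2/2$ by the envelope property, while differentiating the definition (\ref{eq:Zg})--(\ref{eq:Agauss-def}) directly gives $\partial_\l A^g(\b_2,\l)=\tfrac12\meanv{\tfrac1K\|z\|^2}=\tfrac12\meanv{p_{11}}$; comparing the two forces $\bar R^2=\meanv{p_{11}}$. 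Since $p_{11}$ is self-averaging (as recalled above) and $A^{sf}$ is itself a macroscopic, self-averaging quantity, one may replace $\bar R^2=\meanv{p_{11}}$ by the random variable $p_{11}$ inside $A^{sf}$, and $\meanv{p_{11}}^2$ by $\meanv{p_{11}^2}$, without error in the thermodynamic limit.

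Substituting $R^2=\bar R^2=\meanv{p_{11}}$ into the displayed formula, and passing to the $R$-notation of (\ref{eq:Asf}) so that $A^{sf}(\b_2,\meanv{p_{11}})=A^{sf}(\b_2,\sqrt{p_{11}})$ after the self-averaging replacement, yields
\[
A^g(\b_2,\b)=A^{sf}\!\left(\b_2,\sqrt{p_{11}}\right)-\frac{\b_2^2}{4}\meanv{p_{11}^2}+\frac{\b-1}{2}\meanv{p_{11}}+\frac{1+\log\meanv{p_{11}}}{2};
\]
multiplying by $\a$ and adding $A_{SK}^{RS}(\b_1)-\tfrac14\b_1^2$ recovers exactly (\ref{eq:decSK-Sf}). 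The only genuinely non-routine point, and the one I expect to be the main obstacle, is the identification $\bar R^2=\meanv{p_{11}}$ together with the self-averaging replacements: one must check that the maximiser produced by the ensemble-equivalence argument of the Theorem really coincides with the physical self-overlap of the pattern variables in the bipartite model of \cite{NN?}, and that this is consistent with the replica-symmetric regime (and the condition $\bar q_H>0$) in which (\ref{eq:dec-SK-G}) holds. Everything else is bookkeeping of the additive constants in (\ref{eq:A-gauss}).
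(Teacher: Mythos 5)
Your proposal is correct and follows essentially the same route as the paper, which states the Corollary as a direct substitution of the Legendre duality (\ref{eq:A-gauss}) into the decomposition (\ref{eq:dec-SK-G}) of \cite{NN?}, identifying the optimal radius with the self-averaging self-overlap $\meanv{p_{11}}$ of the Gaussian pattern variables. Your envelope-theorem argument ($\partial_\l A^g(\b_2,\l)=\bar R^2/2=\tfrac12\meanv{p_{11}}$) is a clean quantitative justification of that identification, which the paper only asserts on physical grounds with a reference to \cite{Salerno}.
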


\ni Here we stress that, even though the r.h.s. of (\ref{eq:dec-SK-G}), (\ref{eq:decSK-Sf}) are in fact unaffected by the choice of the randomness according to the hypothesis \textbf{H}, the decomposition by itself has been proven only in the case of Gaussian disorder. We do not know in particular whether it can be extended to the original Hopfield model with $\pm 1$ patterns.

\ni In conclusion, the invariance properties under rotations of the Hopfield model, and so its connection with rotationally invariant spin glasses, are indubitably suggestive. They have been certainly already investigated, but probably they still can be useful tools in order to shed more light on its mathematical structure.

\vspace{0.5cm}

{\bf \ni Acknowledgements\\}
It is a pleasure to thank Francesco Guerra, source of inspiration of many ideas in this work. We are also grateful to G\'erard Ben Arous for an enlightening discussion on the paper \cite{BDG}, to Benjamin Schlein for some precious advices 
and to GNFM (Gruppo Nazionale per la Fisica Matematica). Finally we thank two anonymous referees for the suggestions that led to significant improvements of the paper. G.G. is supported by the ERC Grant MAQD 240518. D.T is partially supported by "Avvio alla Ricerca 2014", Sapienza University of Rome.


\begin{thebibliography}{9}

\bibitem{AGZ} G. Anderson, A. Guionnet, O. Zeitouni, {\em An Introduction to Random Matrices}, Cambridge University Press, (2010).

\bibitem{Salerno} A. Barra, F. Guerra, {\em Locking of order parameters in analogical associative neural networks}, Percorsi Incrociati, Antonio Vitolo et al., eds. (2009).

\bibitem{Bip} A. Barra, G. Genovese, F. Guerra, \textit{Equilibrium statistical mechanics of bipartite spin systems}, J. Phys. A: Math. Theor. \textbf{44}, 245002 (2011).

\bibitem{BGGT} A. Barra, G. Genovese, F. Guerra, D. Tantari, {\em About a solvable mean field model of a Gaussian spin glass}, J. Phys. A: Math. Theor. \textbf{47}, 155002, (2014);

\bibitem{NN?} A. Barra, G. Genovese, F. Guerra, D. Tantari, \textit{How glassy are neural networks?}, J. Stat. Mech. P07009 (2012).

\bibitem{BDG} G. Ben Arous, A. Dembo, A. Guionnet, {\em Aging of spherical spin glasses}, Prob. Theor. Related Fields 120, 1, (2001).

\bibitem{BK} T. H. Berlin, M. Kac, {\em The Spherical Model of a Ferromagnet}, Phys. Rev. 86, 821 (1952).

\bibitem{bovbook} A. Bovier, {\em Statistical mechanics of disordered system. A mathematical perspective}, Cambridge University Press, (2006).

\bibitem{CS} A. Crisanti, H. -J. Sommers, {\em The spherical p-spin interaction p-spin glass model: the statistics}, Z. Phys. B Condensed Matter 83, 341, (1992).

\bibitem{persi} P. Diaconis, D. Freedman. {\em A dozen de Finetti-style results in search of a theory}, Ann. Inst. H. Poincar\'e Prob. et Stat., \textbf{23}, 397, (1987). 

\bibitem{FS} Y. V. Fyodorov, H. -J- Sommers, {\em Classical Particle in a Box with Random Potential: exploiting rotational symmetry of replicated Hamiltonian}, Nuclear Physics B 764,128 (2007).

\bibitem{franz} S. Franz, F. Tria, {\em A Note on the Guerra and Talagrand Theorems for Mean Field Spin Glasses: The Simple Case of Spherical Models}, J. Stat. Phys. \textbf{122}, 313, (2005).

\bibitem{univ} G. Genovese, {\em Universality in Bipartite Mean Field Spin Glasses}, J. Math. Phys. \textbf{53}, 123304, (2012);

\bibitem{leshouches} F. Guerra, {\em An Introduction to Mean Field Spin Glass Theory: Methods and Results}, A. Bovier et al. eds, Les Houches, Session LXXXIII, (2005).

\bibitem{hop} J.J. Hopfield, {\em Neural networks and physical systems with emergent collective computational abilities}, Proc. Nat. Acad. Sci. USA \textbf{79},  2554-2558 (1982).

\bibitem{KT} J. M. Kosterlitz, D. J. Thouless, Raymund C. Jones, \textit{Spherical model of a spin glass}, Phys. Rev. Lett. \textbf{36}, 1217 (1976).

\bibitem{Mk} H. P. McKean {\em Geometry of Differential Spaces}, Ann. Prob. \textbf{1}, 197, (1973).

\bibitem{Mon} E. W. Montroll {\em Continuum Models of Cooperative Phenomenon}, Il Nuovo Cimento \textbf{VI}, 264, (1949).

\bibitem{Panc} D. Panchenko, {\em Cavity method in the spherical SK model}, Ann. Inst. H. Poincar\`e, Prob. et Stat. \textbf{45}, 1020, (2009).

\bibitem{PT} D. Panchenko, M. Talagrand, {\em On the Overlap in the Multiple Spherical Models}, Ann. Probab. \textbf{35}, 2321 (2007).

\bibitem{Ru69} D. Ruelle, {\em Statistical Mechanics. Rigorous results}, W.A. Benjamin Inc., New York, 1969.

\bibitem{Tsfer} M. Talagrand, \textit{Free energy of the spherical mean field model}, Prob. Theor. Related Fields \textbf{134}, 339 (2006).

\bibitem{talabook} M. Talagrand, \emph{Mean Field Models for Spin Glasses}, Vol. 1,2, Springer-Verlag Berlin Heidelberg (2011).

\bibitem{Tao} T. Tao, {\em Topics in Random Matrix Theory}, AMS Graduate Studies in Mathematics, Providence, Rhode Island (2012).

\bibitem{VN} J. Von Neumann, {\em Distribution of the Ratio of the Mean Square Successive Difference to the Variance},
Ann. Math, Stat. \textbf{12}, 367, (1941).

\end{thebibliography}
\end{document}